\documentclass[reprint,amsmath,amssymb,aps,nofootinbib,prd,superscriptaddress]{revtex4-2}
\usepackage{custom}
\def\e{\mathrm{e}}
\def\D{\mathrm{d}}

\def\eq#1{\begin{equation}\begin{aligned}#1\end{aligned}\end{equation}}
\def\eref#1{(\ref{eq:#1})}

\begin{document}

\title{Interacting quantum reference frames}

\author{Leonardo Chataignier}
\affiliation{Centro Brasileiro de Pesquisas Físicas, Rua Dr. Xavier Sigaud 150, Urca, CEP: 22290-180, Rio de Janeiro, RJ, Brazil}
\email{lchataig@cbpf.br}
\author{Alexander R. H. Smith}
\affiliation{Department of Physics, Saint Anselm College, Manchester, New Hampshire 03102, USA}
\email{arhsmith@anselm.edu}
\affiliation{Department of Physics and Astronomy, Dartmouth College, Hanover, New Hampshire 03755, USA}

\date{\today}

\begin{abstract}
We formulate quantum reference frames within the Dirac quantization of gauge theories, where physical states are obtained from a kinematical Hilbert space $\mathcal{H}_{\rm kin}$ by group averaging. We assume the gauge group $G$ is unimodular, and we define a quantum reference frame as a system of coherent states associated with a group $G_R$ of frame orientations acting on a tensor factor of $\mathcal{H}_{\rm kin}$. In direct analogy with the Faddeev--Popov procedure in path integrals, we show that the physical inner product can be written in terms of a choice of frame orientation with an appropriate operator measure, but that the physical Hilbert space is independent of this choice of frame. This yields a generalization of the Page--Wootters reduction map used to define conditional states and conditional probabilities of measurement outcomes given a frame orientation. These probabilities correspond to the expectation values of relational Dirac observables obtained from generalizations of previously considered relativization maps. Our main result is that the map that implements a change of frame is unitary, even when $G$ does not act as a tensor-product representation across a factorization of $\mathcal{H}_{\rm kin}$. The formalism therefore accommodates interactions between reference frames and the systems described relative to them. We apply this formalism to temporal frames, showing that a unitary Schr\"odinger equation emerges for an interacting clock and system, and that a previously reported time-nonlocal Schr\"odinger equation is recovered in an appropriate limit. With this, problems of non-unitarity are avoided by the use of the gauge-fixing measure in the physical inner product. We also discuss the case of interacting spatial frames, where the known correspondence between coherence in one frame and entanglement in another is shown to depend on the coupling.
\end{abstract}

\maketitle

\section{\label{sec:intro}Introduction}
A reference frame is a physical system relative to which observations and measurements are made. Since any physical system is ultimately quantum mechanical, reference frames should themselves be described by quantum theory: they may be delocalized, prepared in superpositions, or entangled with the systems they describe. Indeed, the motivation for considering quantum (as opposed to classical) reference frames comes from several directions. Experiments now routinely probe the quantum properties of microscopic systems and correlations between them, opening the possibility of describing quantum systems relative to one another rather than to macroscopic, classical rods and clocks \cite{zych_quantum_2011,moreva_time_2014,pikovski_universal_2015,roura_gravitational_2020,grochowski_quantum_2021,paczos_quantum_2024,hu__feasibility_2024,cepollaro_quantum_2024,sorci_quantum_2026,OConnell_2010,Hornberger_2012,Aspelmeyer_2014,Frowis_2018,Fein_2019,Millen_2020,Bild_2023,Bose_2025,Marletto_2025}. In addition, general relativity leads to the expectation that a theory of quantum gravity should not depend on extrinsic standards of time and space but rather on relational notions built from the degrees of freedom of the theory, thus highlighting the need to understand quantum reference frames~\cite{dewitt_quantum_1967,rovelli_quantum_1991, rovelli_quantum_2004,hardy_construction_2018,giacomini_quantum_2022,freidel_gravitational_2026,chen_quantum_2026,aguilar-gutierrez_relational_2026} and, in particular, quantum clocks~\cite{giovannetti_quantum_2015, ruiz_entanglement_2017,loveridge_relative_2019,marletto_evolution_2017, castro-ruiz_quantum_2020,smith_quantum_2020,Chataignier:2019,hohntrinity2021,Chataignier:2020,hohn_equivalence_2021,giacomini_spacetime_2021,chataignier_relational_2026,gielen_singularity_2020,foti_time_2021,cafasso_quantum_2024,gielen_unitarity_2022,vishal_singularity_2026}. Quantum reference frames can also be used in more general gauge theories, where relevant observables can be constructed relationally, as gauge-invariant quantities ~\cite{rovelli_why_2014,hohn_matter_2024,fedida_groupoidal_2026,kondo_understanding_2026,lacambra_gauss_2026}. Likewise, quantum reference frames play a role in quantum information theory, where information may be encoded, processed, and communicated in relational degrees of freedom~\cite{bartlett_classical_2003,bartlett_reference_2007,gour_resource_2008,bartlett_quantum_2009,ahmadi_communication_2015,smith_communicating_2019,carrozza_correspondence_2024}, and are essential to the foundations of quantum theory itself~\cite{aharonov_charge_1967,poulin_toy_2006,angelo_physics_2011,smith_quantum_2016,loveridge_symmetry_2018, giacominiQuantumMechanicsCovariance2019,hamette_quantum_2020,delahametteperspectiveneutral2021,ali_ahmad_quantum_2022,castro-ruiz_relative_2025,jorquera_riera_uncertainty_2025}.

The inclusion of interactions between reference frames and other systems is paramount, since they are unavoidable when gravity is present. Much of the recent literature on quantum reference frames is based on the celebrated formalism of Page and Wootters~\cite{page_evolution_1983,wootters_time_1984}, which does not, however, contemplate interactions.\footnote{Page does consider interacting spin systems in Ref.~\cite{page_clock_1994}, albeit the global state is assigned through an incoherent $G$-twirl and not by projecting onto a physical Hilbert space as explored here and developed in later works on the Page-Wootters formalism.} While there have been many proposed generalizations of the Page--Wootters formalism to the interacting case, there have been difficulties in obtaining a unitary relational theory~\cite{smith_quantizing_2019, paiva_non-inertial_2022, paiva_flow_2022, cohen_quantum_2023, mendes_non-linear_2021, rijavec_robustness_2023, gemsheim_emergence_2023,ludescher_quantum_2026,rijavec_conditions_2026,shimada_nonunitarity_2026}, and thereby in interpreting the resulting theories.

It is interesting to note that similar problems of non-unitarity occur in gauge theories if one neglects the proper gauge-fixing measure. Feynman famously introduced what he called ``artificial, dopey particles'' to restore unitarity in diagrams of perturbative gravitational processes~\cite{Feynman63}. Later, such particles became known as the Feynmann--DeWitt--Faddeev--Popov ghosts, which were understood to define the proper gauge-fixing measure in the path integral~\cite{Teitelboim1991,HT:book,Barvinsky_1993}.

In this article, we formulate quantum reference frames in the context of gauge theories and show that doing so leads to a unitary relational theory even when the reference frame interacts with a system that is described relative to it. By analogy with general relativity, we take the view that there are no preferred quantum reference frames, so that the physical content of a state cannot depend on which system is used as a reference frame. We encode this frame-independence by imposing an invariance condition on states of a composite system: physical sates are those that are invariant under the action of a gauge group~$G$. Choosing a reference frame then amounts to gauge fixing, so that each choice, together with its associated gauge-fixed inner product, yields a description of one subsystem relative to another. Because gauge fixing is a matter of description rather than of physics, the description obtained in different frames must carry the same invariant content. This requirement is expressed by the unitarity of the change-of-frame map established below.

We begin in Sec.~\ref{Sec:InteractingQRF} by reviewing the quantization of a gauge theory associated with a locally compact, unimodular group~$G$ and the construction of the gauge-invariant physical Hilbert space $\mathcal{H}_{\rm phys}$ from a kinematical Hilbert space $\mathcal{H}_{\rm kin}$ via group averaging. We then introduce quantum reference frames, defined as a coherent-state system associated with the group of reference frames orientations~$G_R$ that acts on a factor $\mathcal{H}_{R}$ of the kinematical Hilbert space, $\mathcal{H}_{\rm kin} \simeq\mathcal{H}_{R} \otimes \mathcal{H}_{S}$, where $\mathcal{H}_{S}$ describes a subsystem $S$ comprised of everything other than the frame $R$. Employing such reference frames allows us to gauge fix the physical inner product in analogy with  the Faddeev--Popov procedure  in path integrals. This induces conditional states of $S$ relative to $R$ through a generalization of the Page--Wootters reduction map.

Our main technical result follows: we construct a change-of-frame map from this generalized reduction map and prove that it is unitary. Crucially, unitarity does not require the gauge group $G$ to act as a tensor product representation across the factorization of the kinematical Hilbert space, $\mathcal{H}_{\rm kin} \simeq \mathcal{H}_{R} \otimes \mathcal{H}_{S}$, into reference frame and subsystem of interest (see also Ref.~\cite{ludescher_quantum_2026}), and may not coincide with $G_R$. This generalization accounts for the possibility of interactions between $S$ and $R$, and reduces to the known change-of-frame maps in the non-interacting case~\cite{giacominiQuantumMechanicsCovariance2019,hohntrinity2021,delahametteperspectiveneutral2021,hamette_quantum_2020}.

In Sec.~\ref{Sec:Relational Dirac observables} we use the generalized reduction map and its inverse to construct relational Dirac observables on $\mathcal{H}_{\rm phys}$. The construction is similar to earlier proposals by one of us \cite{Chataignier:2019,Chataignier:2020}, and  generalizes previously discussed relativization maps \cite{bartlett_reference_2007,loveridge_symmetry_2018,hohntrinity2021}. We show that the expectation value of these observables are conditional probabilities associated with a measurement of $S$ having a particular outcome conditioned on $R$ taking a particular orientation $r\in G_R$. Sections~\ref{Sec:TemporalFrames} and \ref{Sec:ChangeOfSpatialFrame} apply the developed formalism. In Sec.~\ref{Sec:TemporalFrames} we treat temporal frames and show that a unitary Schr\"{o}dinger equation emerges even when the clock interacts with a system that is described relative to it; the time-nonlocal Schr\"{o}dinger equation of Ref.~\cite{smith_quantizing_2019} is recovered in an appropriate limit. In Sec.~\ref{Sec:ChangeOfSpatialFrame} we apply the change-of-frame map to spatial reference frames that are coupled to one another, and show that the known correspondence between coherence in one frame and entanglement in another frame~\cite{giacominiQuantumMechanicsCovariance2019,ali_ahmad_quantum_2022,de_la_hamette_entanglement-asymmetry_2022} depends on this coupling. We conclude in Sec.~\ref{Sec:ConclusionAndOutlook} with a summary of results and outlook on future directions.

\section{Interacting quantum reference frames}
\label{Sec:InteractingQRF}
\subsection{The kinematical and physical Hilbert spaces}
The statistics of a quantum observable are associated with a sample space $\Omega$ defined as the set of all possible measurement outcomes, an event space $\mathcal{F}$ defined as a $\sigma$-algebra of subsets of~$\Omega$, and a probability measure on $\mathcal{F}$. The probability of an event $X \in \mathcal{F}$ is given by the Born rule
\begin{equation}\label{eq:Born}
    P(X) = \tr \left[ \rho E(X)\right],
\end{equation}
where $\rho \in \mathcal{S}(\mathcal{H})$ is a density operator (i.e., positive-semidefinite operator with unit trace) on a Hilbert space $\mathcal{H}$ describing the state of a system of interest, and $E(X) \in \mathcal{E}(\mathcal{H})$ is the effect operator (i.e., a bounded self adjoint operator such that $0\leq E(X) \leq I$) associated with the event $X$. The effect operators constitute a positive operator-valued measure (POVM) associated with the observable, defined as the map~\cite{hall_quantum_nodate,busch_quantum_2016}
\begin{align}
    E \, : \, \mathcal{F} \to \mathcal{E}(\mathcal{H}), \
    X \mapsto E(X),
\end{align}
where the effect operators $E(X)$ must  1) be positive semi-definite  for all $X \in \mathcal{F}$, 2) sum to the identity, $E(\Omega) = I$, and 3) satisfy $E(\cup_i X_i) = \sum_i E(X_i)$ for disjoint subsets $X_i\in\mathcal{F}$.\footnote{If the effect operators are projective, $E(X)^2 = E(X)$ for all $X\in \mathcal{F}$, the POVM is said to be a `projection-valued measure' (PVM) and the associated observable is said to be `sharp.' Through the spectral theorem, a PVM defines a unique self-adjoint operator, and conversely, a self-adjoint operator defines a PVM. However, there is no {\it a priori} reason to restrict to observables associated with self-adjoint operators and PVMs, and indeed, general reference frames are described by more general POVMs.} These properties ensure that $P(X)$ in Eq.~\eqref{eq:Born} is a well-defined probability distribution.

Now suppose that events related by a group $G$ of gauge transformations represent the same physical situation. The probability of an event $X$ should thus be equivalent to the probability of the transformed event $gX$ for all $g\in G$. We are led to an equivalence class $X\sim gX$, and the choice of representative can be thought of as a choice of gauge. As the theory does not single out a preferred choice, this is a redundancy that might refer to the choice of dynamical reference frames in spacetime (e.g., clocks and rods) or to more abstract quantities such as a choice of gauge in electromagnetism. We will refer to $G$ as the `gauge group' and to $g\in G$ as a `gauge transformation.' For simplicity, we assume that $G$ is locally compact and unimodular, so that the left and right Haar measures coincide, being denoted by $\D g$ with suitable normalization constants absorbed.

Let the Hilbert space $\mathcal{H}$ carry a unitary representation of the gauge group, $U: G \to \mathcal{B}(\mathcal{H}), \, g \mapsto U(g)$. Given the equivalence relation $X \sim gX$, it follows that the effect operators associated with the events $X$ and $gX$ are related by $E(gX) = U(g)E(X)U(g)^\dagger$; such POVMs are said to be covariant with respect to $G$~\cite{holevo_probabilistic_1982, busch_quantum_2016}. For the probability distribution in Eq.~\eqref{eq:Born} to be invariant under such a gauge transformation, due to cyclic property of the trace, the state of the system must be invariant,
\begin{equation}
    \rho = U(g)^\dagger \rho U(g) \, \iff \,  [\rho, U(g)] = 0,
    \label{eq:invarainceOfState}
\end{equation}
for all $g\in G$.

The spectrum of a unitary operator can be decomposed into its point spectrum and continuous spectrum.  If the continuous spectrum of $U(g)$ is empty, then there exist invariant states $\rho \in \mathcal{S}(\mathcal{H})$ satisfying Eq.~\eqref{eq:invarainceOfState}. In this case, $\rho$ and $U(g)$ can be diagonalized in the same eigenbasis, and if $\rho$ is pure, it can be written as $\rho = \ket{\Psi} \!\bra{\Psi}$, where $\ket{\Psi}$ is an eigenstate of $U(g)$.

However, if the point spectrum of $U(g)$ is empty, then no invariant states exist. In this case, or when neither the point nor the continuous spectrum is empty, one can rig the Hilbert space $\mathcal{H}$ by introducing a dense subspace $\Phi \subset \mathcal{H}$ and its dual space $\Phi^*$, such that $\Phi \subset \mathcal{H} \subset  \Phi^*$ is a Gelfand triple. By the generalized spectral theorem~\cite{hall_quantum_nodate, busch_quantum_2016}, there exist elements of $\Phi^*$ that are invariant under the action of $U(g)$ up to an overall phase. We will neglect this phase and consider only the elements of $\Phi^*$ on which the gauge group acts trivially, that is,
\eq{\label{eq:U-invariance}
U(g)\ket{\Psi}=\ket{\Psi} \text{ for } \ket{\Psi} \in \Phi^*,} 
which defines a set of gauge-invariant generalized states.\footnote{This is equivalent to the Dirac quantization of a constrained system in which classical constraints are promoted to self-adjoint operators $C_\mu$ acting on $\mathcal{H}_{\rm kin}$ and demanding physical states lie in the kernel of these constraint operators. This can be seen by taking a derivative of Eq.~\eqref{eq:U-invariance}, which yields $C_\mu\ket{\Psi}\ce-i\partial/\partial g^\mu|_{g = 0} U(g)\ket{\Psi} = 0$, where $g^\mu$ are coordinates in the group manifold.} This subset of invariant generalized states can be turned into a Hilbert space by introducing an inner product between elements of this set and completing it with respect to the norm induced by this inner product (e.g., Refs.~\cite{marolf_group_2002, rovelli_quantum_2004,thiemann_modern_2008}). The resulting Hilbert space is called the physical Hilbert space $\mathcal{H}_{\rm phys}$ and the Hilbert space from which it was constructed is called the kinematical Hilbert space $\mathcal{H}_{\rm kin} \simeq \mathcal{H}$.

The physical inner product can be defined in terms of the kinematical inner product via group averaging~\cite{marolf_group_2002,thiemann_modern_2008}. Specifically, the inner product between $\ket{\Psi_1}$ and $\ket{\Psi_2}$ in $\mathcal{H}_{\rm phys}$ can be expressed in terms of the kinematical inner product between $\ket{\psi_1}$ and $\ket{\psi_2}$ in $\mathcal{H}_{\rm kin}$ as 
\begin{equation}
    \braket{\Psi_1 | \Psi_2}_{\rm phys} \ce \bra{\psi_1} \int_G \D g \, U(g) \ket{\psi_2}_{\rm kin}, 
\label{eq:PhysInnerProduct}
\end{equation}
where
\begin{equation}\label{eq:state-averaging}
    \ket{\Psi_i} = \int_G \D g \, U(g) \ket{\psi_i} \text{ for } i\in \{1,2\} .
\end{equation}
If the continuous spectrum of $U(g)$ is empty, this inner product coincides with the kinematical inner product of two physical states of the form given in Eq.~\eqref{eq:state-averaging}, so it can be adopted as a general definition of the inner product for a theory with a locally compact, unimodular gauge group $G$.

\subsection{Quantum reference frames}
A reference frame is a subsystem relative to which other subsystems are described. With this in mind, let us partition the kinematical Hilbert space into two subsystems $\mathcal{H}_{\rm kin} = \mathcal{H}_R\otimes\mathcal{H}_S$, where $\mathcal{H}_R$ is the Hilbert space of a chosen reference frame $R$, and $\mathcal{H}_S$ is the Hilbert space of a subsystem  $S$ comprised of everything else. 

Associated with any reference frame is a group of frame transformations $G_R$ whose elements $r \in G_R$ correspond to different orientations of the reference frame; for example, in the case of a spatial reference frame, $G_R$ would be the translation group and $r$ would label different positions (i.e., orientations) of the reference frame. We assume for simplicity that $G_R$ is a Lie group, although the formalism to be developed could also be applied to discrete groups. Suppose that $\mathcal{H}_R$ carries a unitary representation of the group of frame transformations, $U_R: G_R \to \mathcal{H}_R, \ r \mapsto U_R(r)$. Associated with this representation is a set of group coherent states $\{\ket{r} \ce U_R(r) \ket{r_0} \ | \ r\in G_R\}$, where $\ket{r_0}$ is chosen so that this set forms a (possibly overcomplete) basis for $\mathcal{H}_R$~\cite{perelomov_generalized_1986,zhang_coherent_1990}; depending on whether $G_R$ is compact or noncompact, the states $\ket{r}$ may or may not be elements of~$\mathcal{H}_R$. Given that these coherent states are labeled by $r\in G_R$, each coherent state is associated with a different orientation of the reference frame. 

A measurement of the reference frame orientation is then described by a POVM with outcome space $\Omega_R = G_R$ and effect densities $\{E_R(r) \ce \ket{r}\!\bra{r}  \ | \  r \in G_R\}$. Due to the definition of the group coherent states, this reference frame orientation POVM is covariant with respect to $G_R$:
\eq{
E_R(rs)=U_R(r)E_R(s)U_R(r)^\dagger, \ \forall \ r,s \in G_R.
} 
Such group covariant POVMs are known to be optimal estimators of group transformations~\cite{holevo_probabilistic_1982,braunstein_statistical_1994,braunstein_generalized_1996,wiseman_quantum_2010,haapasalo_optimal_2021}, such as the orientation $r$ of the reference frame $R$.

In general, the representation of the gauge group $U(g)$ will not be a tensor product representation across the partition $\mathcal{H}_{\rm kin} = \mathcal{H}_R\otimes\mathcal{H}_S$, that is, $U(g) \neq V_R(g) \otimes V_S(g)$, where $V_R(g)$ and $V_S(g)$ are unitary representations of $g\in G$ on $\mathcal{H}_R$ and $\mathcal{H}_S$, respectively; we will refer to this as the interacting case because gauge transformations can entangle the reference frame and system across the kinematical partition. If it happens to be the case that $U(g) = V_R(g) \otimes V_S(g)$, then indeed $U(g)$ is a tensor product representation of $G$; we will refer to this as the non-interacting case. In this case, $V_R(g)$ may not be a representation of the group of reference frames $g\in G_R$, that is, $G\neq G_R$; see Refs.~\cite{vanrietvelde_switching_2023,ali_ahmad_relational_2025} for examples. It may be desirable to choose the reference frame and associated factorization of the kinematical Hilbert space so that the representation of the gauge group is a tensor product representation; however, this may not always be possible or describe the physical situation of interest.

\subsection{Gauge fixing and conditional states}
We are interested in a gauge-invariant description of the subsystem $S$ relative to a reference frame $R$ with an orientation $r\in G_R$. To do so, we will express the physical inner product in terms of the kinematical inner product and a reference frame orientation $r\in G_R$, which will be seen to induce conditional states of $S$ conditioned on $R$ having orientation $r$ via a generalization of the Page--Wootters reduction map.

To accomplish this, we will first derive an identity analogous to the Faddeev--Popov resolution of the identity that is used to regularize gauge-theoretic path integrals. We can extract the gauge-invariant content of the kinematical operator $E_R(r)\otimes I_S$, where $I_S$ is the identity on $\mathcal{H}_S$, by the so-called $G$-twirl over the gauge group $G$ 
\eq{\label{eq:Gtwirl}
\mathcal{G}[E_R(r)\otimes I_S]\ce\int_G\D g\, U(g)\left(E_R(r)\otimes I_S\right)U(g)^{\dagger} .
}
One can verify that $\mathcal{G}[E_R(r)\otimes I_S]$ commutes with $U(g)$, and is thus gauge invariant. In what follows, we consider the case when $\mathcal{G}[E_R(r)\otimes I_S]$ is a well-defined and invertible operator on $\mathcal{H}_{\rm kin}$.\footnote{The invertibility of this operator will depend on the spectral properties of the $U(g)$ and $E_R(r)$ such that its kernel is trivial. See the Sec.~\ref{Sec:ConclusionAndOutlook} for a possible connection to the Gribov problem.} Its inverse $\Delta_R(r)$ is strictly positive, and thus it has a unique square root that commutes with $U(g)$. Then,
\eq{\label{eq:gf}
I_{\rm kin} &= \Delta_R(r) \mathcal{G}[E_R(r)\otimes I_S] \\
&=\Delta^{\frac12}_R(r)\mathcal{G}[E_R(r)\otimes I_S]\Delta^{\frac12}_R(r)\\
&= \mathcal{G}[\Delta^{\frac12}_R(r)(E_R(r)\otimes I_S)\Delta^{\frac12}_R(r)]\\
&=:\mathcal{G}[\Xi_R(r)],
}
where $I_{\rm kin}$ is the identity on $\mathcal{H}_{\rm kin}$ and 
\eq{
\Xi_R(r)\ce\Delta^{\frac12}_R(r)(E_R(r)\otimes I_S)\Delta^{\frac12}_R(r). }
This is the analogue of the usual Faddeev--Popov resolution of the identity
\eq{\label{eq:FP}
1 = \boldsymbol{\Delta}_{\rm FP}[A]\int\mathcal{D}\varepsilon\,\delta\!\left(\chi[A^\varepsilon]\right),
}
where $A$ is a gauge field and $A^{\varepsilon}$ is its gauge-transformed counterpart, $\chi[A^{\varepsilon}]$ is a gauge-fixing condition, $\int\mathcal{D}\epsilon$ is a functional integration over the gauge group, and $\boldsymbol{\Delta}_{\rm FP}[A]$ is the absolute value of the so-called Faddeev--Popov determinant, which can be expanded in terms of ghost variables (e.g., \cite{Teitelboim1991,HT:book,Barvinsky_1993}). In Eq.~\eref{gf}, $E_R(r)\otimes I_S$ is the analogue of the gauge-fixing delta distribution $\delta\!\left(\chi[A^\varepsilon]\right)$, the $G$-twirl is the analogue of the functional gauge-group integration, and $\Delta_R(r)$ is the analogue of the absolute value of the Faddeev--Popov determinant. 

By construction, the operator $\Xi_R(r)$ serves as a gauge-fixing measure on the physical inner product, since we can use Eqs.~\eqref{eq:PhysInnerProduct},~\eqref{eq:Gtwirl} and~\eqref{eq:gf} to express the physical inner product in terms of the kinematical inner product:
\eq{\label{eq:PhysInnerProduct2}
    \braket{\Psi_1| \Psi_2}_{\rm phys} &\ce \bra{\psi_1 } \int_G \D g \, U(g) \ket{\psi_2}_{\rm kin} \\
    &= \braket{\psi_1|\Psi_2}_{\rm kin}\\
      &=  \bra{\psi_1} \int_G dg \,U(g) \Xi_R(r)  U(g)^\dagger \ket{\Psi_2}_{\rm kin} \\
      &=  \left( \bra{\psi_1} \int_G \D g \, U(g)^\dagger\right) \Xi_R(r)  \ket{\Psi_2}_{\rm kin} \\
      &= \bra{\Psi_1 } \Xi_R(r) \ket{ \Psi_2}_{\rm kin}.
}
Thus, one can insert $\Xi_R(r)$ in the kinematical inner product to obtain the physical overlap of two elements of~$\mathcal{H}_{\rm phys}$. Notice, however, that this overlap is by construction independent of the choice of reference frame and, in particular, does not depend on the frame's orientation~$r\in G_R$.

In general, the effect densities $E_R(r)\otimes I_S$ for all $r \in G_R$ do not form a covariant POVM with respect to the gauge group $G$. However, for every $r\in G_R$, $\Xi_R(r)$ is the seed of a POVM that is covariant with respect to $G$, specifically, a POVM with outcome space $\Omega = G$ and effect operators
\begin{equation}
    E(X) = \int_{X\subseteq G} \D g \, U(g)\Xi_R(r)U(g)^{\dagger}, 
\end{equation}
that satisfy the covariance condition
\begin{equation}
E(gX) = U(g) E(X) U(g)^\dagger.  
\end{equation}
In this way, $\Xi_R(r)$ can be thought of as a `gauge covariantization' of the effect density $E_R(r).$ 

In the non-interacting case when $G =G_R$ and $U(g) = U_R(g) \otimes U_S(g)$, the effect densities $E_R(r)\otimes I_S$ form a POVM that is covariant with respect to $G$. In this case, Eq.~\eref{Gtwirl} becomes
\eq{
\mathcal{G}[E_R(r)\otimes I_S]\ce\int_G\D g\, E_R(gr)\otimes I_S = I_{\rm kin} ,
}
and $\Delta_R(r) = I_{\rm kin}$ is simply the identity on $\mathcal{H}_{\rm kin}$. Then, Eq.~\eqref{eq:PhysInnerProduct2} takes the form
\eq{\label{eq:PhysInnerProduct3}
    \braket{\Psi_1| \Psi_2}_{\rm phys} &
      &= \bra{\Psi_1 } \left(\ket{r}\!\bra{r} \otimes I_S \right)\ket{ \Psi_2}_{\rm kin},
}
which coincides with the conditional inner product introduced in Refs.~\cite{smith_quantizing_2019,hohntrinity2021,delahametteperspectiveneutral2021}.

The form of the physical inner product in Eq.~\eqref{eq:PhysInnerProduct2} suggests defining conditional states of the system $S$ relative to the reference frame $R$ having the orientation $r$ as
\begin{equation}
    \ket{\psi_{S}(r)} \ce  \bra{r} \otimes I_S \Delta_R^{\frac{1}{2}}(r) \ket{\Psi} = \mathcal{R}(r) \ket{\Psi},
    \label{eq:conditionalState}
\end{equation}
where we have introduced the reduction map
\eq{
        \mathcal{R}(r): \mathcal{H}_{\rm phys} \to \mathcal{H}_{S|R}, \ 
        \ket{\Psi} \mapsto \ket{\psi_S(r)},
        \label{eq:reducationMap}
}
where $\mathcal{H}_{S|R}$ is the Hilbert space of $S$ relative to $R$, defined as the image of $\mathcal{R}(r)$; as shown in Appendix~\ref{SubspaceAppendix}, $\mathcal{H}_{S|R} \subseteq \mathcal{H}_S$.

It follows that the overlap between two physical states can be expressed as the overlap between two conditional states of $S$ relative $R$ as 
$\braket{\Psi_1|\Psi_2}_{\rm phys} = \braket{\psi_{S,1}(r) | \psi_{S,2}(r)}$, where $\ket{\psi_{S,i}(r)} = \mathcal{R}(r) \ket{\Psi_i}$ for $i\in\{1,2\}$. The conditional states in Eq.~\eqref{eq:conditionalState} and the formalism that follows constitute a generalization of the Page--Wootters formalism~\cite{page_evolution_1983,wootters_time_1984} to unimodular gauge groups $G$ and a possibly distinct group of reference frames $G_R$ along the lines of Refs.~\cite{smith_quantizing_2019,hohntrinity2021,delahametteperspectiveneutral2021}.

The reduction map is invertible, with left-inverse 
\begin{equation}
    \mathcal{R}^{-1}(r) \ce \int_G\D g\, U(g) \Delta_R^{\frac12}(r)\ket{r}\otimes I_S,
\end{equation}
which can be verified by direct computation:
\begin{equation}
\begin{aligned}
    &\mathcal{R}^{-1}(r) \circ \mathcal{R}(r)\ket{\Psi}\\
    &= \int_G\D g\,U(g)\Delta_R^{\frac12}(r)\ket{r}\!\bra{r}\otimes I_S \Delta_R^{\frac12}(r)\ket{\Psi} \\
    &= \int_G\D g\,U(g)\Delta_R^{\frac12}(r)\ket{r}\!\bra{r}\otimes I_S \Delta_R^{\frac12}(r)U(g)^\dagger\ket{\Psi} \\
    &=\ket{\Psi},
    \label{eq:Iphys}
\end{aligned}
\end{equation}
where the second equality follows from the fact that the domain of $\mathcal{R}(t)$, which is the physical Hilbert space $\mathcal{H}_{\rm phys}$, is invariant under the action of $U(g)^\dagger$, and the last equality follows from the definition of $\Delta_R(r)$. Given that the reduction map is invertible, it follows that the physical Hilbert space and Hilbert space of $S$ relative to $R$ are isomorphic, $\mathcal{H}_{\rm phys} \simeq \mathcal{H}_{S|R}$. Moreover, using the identity above and the definition of the reduction map, a physical state is seen to be entangled across the $\mathcal{H}_R\otimes\mathcal{H}_S$ partition of the kinematical Hilbert space:
\begin{equation}
    \ket{\Psi}=\int_G \D g \,U(g) \Delta_R^{\frac{1}{2}}(r) \ket{r}\otimes\ket{\psi_{S}(r)}.
\end{equation}

We now show that the norm of a conditional state of $S$ relative to $R$ is orientation-independent and determined by the norm of the physical state. This is important because, to preserve the probabilistic interpretation of the conditional state, its norm should not depend on the orientation of $R$.
\begin{theorem}
\label{theorem:isometry}
    The norm of the conditional state $\ket{\psi_S(r)} = \mathcal{R}(r)\ket{\Psi}$ is $r$-independent and equal to of the norm of the associated element of $\mathcal{H}_{\rm phys}$, $|\psi_S(r)| = |\Psi|$.
\end{theorem}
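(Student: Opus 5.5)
The plan is to compute the squared norm of the conditional state directly from the definition of the reduction map in Eq.~\eqref{eq:conditionalState}, and then recognize the result as the gauge-fixed form of the physical inner product established in Eq.~\eqref{eq:PhysInnerProduct2}. Concretely, we write
\begin{equation*}
|\psi_S(r)|^2=\braket{\psi_S(r)|\psi_S(r)}=\bra{\Psi}\Delta_R^{\frac12}(r)\left(\ket{r}\!\bra{r}\otimes I_S\right)\Delta_R^{\frac12}(r)\ket{\Psi}_{\rm kin},
\end{equation*}
using that $\Delta_R^{\frac12}(r)$ is self-adjoint, being the positive square root of the strictly positive operator $\Delta_R(r)$. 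The operator sandwiched between $\bra{\Psi}$ and $\ket{\Psi}$ is exactly $\Xi_R(r)=\Delta^{\frac12}_R(r)(E_R(r)\otimes I_S)\Delta^{\frac12}_R(r)$, so $|\psi_S(r)|^2=\bra{\Psi}\Xi_R(r)\ket{\Psi}_{\rm kin}$.

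Next we invoke Eq.~\eqref{eq:PhysInnerProduct2} with $\ket{\Psi_1}=\ket{\Psi_2}=\ket{\Psi}$, which gives $\bra{\Psi}\Xi_R(r)\ket{\Psi}_{\rm kin}=\braket{\Psi|\Psi}_{\rm phys}=|\Psi|^2$. Since the physical inner product is defined by group averaging in Eq.~\eqref{eq:PhysInnerProduct} with no reference to $R$ or $r$, the right-hand side is manifestly $r$-independent. Taking square roots yields $|\psi_S(r)|=|\Psi|$ for every $r\in G_R$. This also shows that $\mathcal{R}(r)$ is an isometry from $\mathcal{H}_{\rm phys}$ onto $\mathcal{H}_{S|R}$, and polarization extends the statement to arbitrary overlaps.

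The only delicate point is justifying that the kinematical expressions make sense when $\ket{\Psi}$ is a generalized (non-normalizable) element of $\Phi^*$. We expect this to be the main obstacle. To handle it, we would write $\ket{\Psi}=\int_G\D g\,U(g)\ket{\psi}$ with $\ket{\psi}\in\Phi$ as in Eq.~\eqref{eq:state-averaging}. We then move one group average onto the other side, following the steps in Eq.~\eqref{eq:PhysInnerProduct2}: using the gauge invariance of $\Delta_R(r)$ and $U(g)^\dagger\ket{\Psi}=\ket{\Psi}$, the quantity $\bra{\Psi}\Xi_R(r)\ket{\Psi}_{\rm kin}$ becomes $\bra{\psi}\mathcal{G}[\Xi_R(r)]\ket{\Psi}_{\rm kin}=\braket{\psi|\Psi}_{\rm kin}$, which is finite and equals $\braket{\Psi|\Psi}_{\rm phys}$ by definition. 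Here the identity $\mathcal{G}[\Xi_R(r)]=I_{\rm kin}$ of Eq.~\eqref{eq:gf} is assumed, in line with the paper's standing assumption that $\mathcal{G}[E_R(r)\otimes I_S]$ is well defined and invertible.
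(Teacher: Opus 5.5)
Your proposal is correct and follows essentially the same route as the paper's proof. You expand $\braket{\psi_S(r)|\psi_S(r)}$ as $\bra{\Psi}\Delta_R^{\frac12}(r)(\ket{r}\!\bra{r}\otimes I_S)\Delta_R^{\frac12}(r)\ket{\Psi}_{\rm kin}$ and identify it with $\braket{\Psi|\Psi}_{\rm phys}$ via Eq.~\eqref{eq:PhysInnerProduct2}; your extra remark on generalized states just repeats the group-averaging manipulation already contained in that equation, so it adds care but no new ingredient.
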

\begin{proof}
     Consider a conditional state of the system  defined as $\ket{\psi_S(r)} = \mathcal{R}(r) \ket{\Psi}$, with $\ket{\Psi}\in\mathcal{H}_{\rm phys}$. Observe that
\begin{equation}
    \begin{aligned} \abs{\psi_S(r)}^2 &= \braket{\psi_S(r)|\psi_S(r)} \\
    &=\bra{\Psi} \mathcal{R}(r)^\dagger \mathcal{R}(r) \ket{\Psi}_{\rm kin} \\ 
    &=\bra{\Psi} \Delta_R^{\frac{1}{2}}(r) \ket{r}\!\bra{r} \otimes I_S \Delta_R^{\frac{1}{2}}(r) \ket{\Psi}_{\rm kin} \\  &=\braket{\Psi|\Psi}_{\rm phys} \\
    &=\abs{\Psi}^2,
    \end{aligned}
\end{equation}
from which it follows that the norm of $\ket{\psi_S(r)}$ is $r$-independent and equal to the norm of $\ket{\Psi}$. 
\end{proof}

\subsection{Changing quantum reference frames}
Now consider a different partition of of the kinematical Hilbert space, $\mathcal{H}_{\rm kin} = \mathcal{H}_{\tilde{R}}\otimes\mathcal{H}_{\tilde{S}}$, into a new system $\tilde{S}$ and reference frame $\tilde{R}$ associated with the reduction map $\tilde{\mathcal{R}}(\tilde{r})$. We can compose the reduction maps of different reference frames to define a change-of-frame map as
\begin{equation}
    \ket{\psi_{\tilde{S}}(\tilde{r})} = \tilde{\mathcal{R}}(\tilde{r}) \circ \mathcal{R}^{-1}(r) \ket{\psi_S(r)}.
\end{equation}
In the particular case in which $\tilde{R}$ and $R$ are the same frame, then this map corresponds to a change of frame orientation, $r\to\tilde{r}$.

We now show that the change-of-frame map is unitary, even in the presence of interactions (i.e., $\Delta_R(r) \neq I_{\rm kin}$).
\begin{figure}
    \centering   \includegraphics[width=\linewidth]{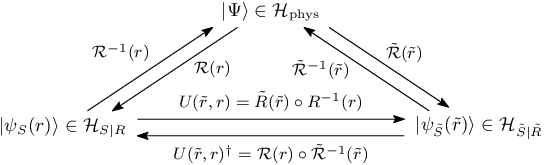}
    \caption{The reduction maps $\mathcal{R}(r)$ and $\tilde{\mathcal{R}}(\tilde{r})$ map physical states $\ket{\Psi} \in \mathcal{H}_{\rm phys}$ to conditional states  $\ket{\psi_S(r)} \in \mathcal{H}_{S|R}$ and $\ket{\psi_{\tilde{S}}(\tilde{r})} \in \mathcal{H}_{\tilde{S}|\tilde{R}}$ of $S$ and $\tilde{S}$ relative to the reference frames $R$ and $\tilde{R}$, respectively. By composing these maps and their inverses, the change-of-frame map $U(\tilde{r},r)$ is constructed, which maps conditional states relative to one reference $R$ and orientation $r$ to conditional states relative to a different reference frame $\tilde{R}$ and orientation $\tilde{r}$.
    } 
    \label{fig:ChangeOfFrame}
\end{figure}

\begin{theorem}
\label{thm:two}
    The change-of-frame map 
    \begin{equation}
        \begin{aligned}
            U(\tilde{r},r)\ce \tilde{\mathcal{R}}(\tilde{r}) \circ \mathcal{R}^{-1}(r)
        \label{eq:change-frame-map}
        \end{aligned}
    \end{equation} is unitary with respect to the inner products on $\mathcal{H}_{S|R}$ and $\mathcal{H}_{\tilde{S}|\tilde{R}}$, which are induced by the physical inner product.
\end{theorem}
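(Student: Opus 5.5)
The plan is to reduce unitarity of $U(\tilde r,r)$ to two facts about each reduction map: it is isometric from $\mathcal{H}_{\rm phys}$ onto its image (Theorem~\ref{theorem:isometry}), and it is inverted by $\mathcal{R}^{-1}(r)$ on that image. On $\mathcal{H}_{S|R}$ we use the inner product induced by the physical one: for $\ket{\phi_i}=\mathcal{R}(r)\ket{\Psi_i}$ we set $\braket{\phi_1|\phi_2}_{S|R}\ce\braket{\Psi_1|\Psi_2}_{\rm phys}$. By Theorem~\ref{theorem:isometry} and polarization, this coincides with the kinematical inner product of $\mathcal{H}_S$ restricted to $\mathcal{H}_{S|R}$. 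The same holds for $\tilde{\mathcal{R}}(\tilde r)$ and $\mathcal{H}_{\tilde S|\tilde R}$, since Theorem~\ref{theorem:isometry} applies verbatim to any frame for which the relevant twirl is invertible.

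First I will check that $\mathcal{R}(r):\mathcal{H}_{\rm phys}\to\mathcal{H}_{S|R}$ is a bijection. It is surjective because $\mathcal{H}_{S|R}$ is defined as its image. It is injective because Eq.~\eqref{eq:Iphys} gives $\mathcal{R}^{-1}(r)\circ\mathcal{R}(r)=I_{\rm phys}$. Consequently, on $\mathcal{H}_{S|R}$ the left inverse is also a right inverse: for $\ket{\phi}=\mathcal{R}(r)\ket{\Psi}$,
\begin{equation*}
\mathcal{R}(r)\circ\mathcal{R}^{-1}(r)\ket{\phi}=\mathcal{R}(r)\ket{\Psi}=\ket{\phi}.
\end{equation*}
So $\mathcal{R}^{-1}(r)$ restricted to $\mathcal{H}_{S|R}$ is a two-sided inverse landing in $\mathcal{H}_{\rm phys}$, and it is isometric because $\mathcal{R}(r)$ is.

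I also need $\mathcal{R}^{-1}(r)$ to map $\mathcal{H}_{S|R}$ into $\mathcal{H}_{\rm phys}$, so that $\tilde{\mathcal{R}}(\tilde r)$ can act on its output. This holds because its output has the form $\int_G\D g\,U(g)(\cdots)$, which is invariant by left-invariance of the Haar measure. Then $U(\tilde r,r)=\tilde{\mathcal{R}}(\tilde r)\circ\mathcal{R}^{-1}(r)$ is a composition of isometric bijections $\mathcal{H}_{S|R}\to\mathcal{H}_{\rm phys}\to\mathcal{H}_{\tilde S|\tilde R}$. Hence it is surjective, and
\begin{equation*}
\braket{U\phi_1|U\phi_2}_{\tilde S|\tilde R}=\braket{\Psi_1|\Psi_2}_{\rm phys}=\braket{\phi_1|\phi_2}_{S|R}.
\end{equation*}
A surjective isometry is unitary. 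Its inverse is $\mathcal{R}(r)\circ\tilde{\mathcal{R}}^{-1}(\tilde r)=U(r,\tilde r)$, which equals the adjoint $U(\tilde r,r)^\dagger$.

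The main obstacle is making sure the inner products are used consistently. Unitarity must be stated relative to the physically induced inner products; Appendix~\ref{SubspaceAppendix}'s inclusion $\mathcal{H}_{S|R}\subseteq\mathcal{H}_S$ together with Theorem~\ref{theorem:isometry} shows these agree with the kinematical ones on $S$ and $\tilde S$. A further subtlety is that elements of $\mathcal{H}_{\rm phys}$ are generalized states in $\Phi^*$. The direct computation $\mathcal{R}(r)^\dagger\mathcal{R}(r)=\Xi_R(r)$ must therefore be read through Eq.~\eqref{eq:PhysInnerProduct2}, and I would rely on that identity rather than on an adjoint taken in $\mathcal{H}_{\rm kin}$. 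If completion issues arise, I would prove the isometry on the dense set of group-averaged states $\int\D g\,U(g)\ket{\psi}$ and extend by continuity.
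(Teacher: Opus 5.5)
Your proof is correct, but it takes a different route from the paper's.

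\textbf{The paper's route.} It writes out the explicit kernel
$U(\tilde r,r)=\int_G \mathrm{d}g\,(\bra{\tilde r}\otimes I_{\tilde S})\Delta^{\frac12}_{\tilde R}(\tilde r)\,U(g)\,\Delta^{\frac12}_R(r)(\ket{r}\otimes I_S)$.
It then takes the kinematical adjoint, using $U(g)^\dagger=U(g^{-1})$ and inversion invariance of the Haar measure (i.e.\ unimodularity), to get $U(\tilde r,r)^\dagger=\mathcal{R}(r)\circ\tilde{\mathcal{R}}^{-1}(\tilde r)$. Finally it checks $U^\dagger U=I_{S|R}$ and $UU^\dagger=I_{\tilde S|\tilde R}$ from the left-inverse identity \eqref{eq:Iphys} and the right-inverse-on-the-image identity \eqref{eq:rightInverse}. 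Theorem~\ref{theorem:isometry} appears only afterwards, as a consistency check.

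\textbf{Your route.} You make Theorem~\ref{theorem:isometry} the engine. Each reduction map is an isometric bijection of $\mathcal{H}_{\rm phys}$ onto its image. Hence $U(\tilde r,r)$ factors through $\mathcal{H}_{\rm phys}$ as a composition of two such bijections and is a surjective isometry. The adjoint formula then follows from ``inverse equals adjoint'' rather than being computed.

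\textbf{What each buys.} Your approach avoids formal adjoints of operators built from generalized states. It also needs no further appeal to unimodularity at this step, since that is already absorbed into Eq.~\eqref{eq:PhysInnerProduct2} and Theorem~\ref{theorem:isometry}. And it shows clearly that the interaction enters only through $\Delta_R(r)$ in the isometry property: unitarity is forced because both frame descriptions are isometric copies of the same $\mathcal{H}_{\rm phys}$. The paper's approach, in exchange, gives the adjoint as an explicit kernel identity in terms of kinematical data. That identity holds formally on all of $\mathcal{H}_{\tilde S}$, not only on $\mathcal{H}_{\tilde S|\tilde R}$.

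\textbf{A small simplification.} Your separate group-averaging argument that $\mathcal{R}^{-1}(r)$ maps $\mathcal{H}_{S|R}$ into $\mathcal{H}_{\rm phys}$ is unnecessary. The identity $\mathcal{R}^{-1}(r)\mathcal{R}(r)\ket{\Psi}=\ket{\Psi}$ already shows it.
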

\begin{proof} 
Observe that
\eq{\notag
U(\tilde{r},r) = \int_G \D g \, \bra{\tilde{r}} \otimes I_{\tilde{S}}\Delta_{\tilde{R}}^{\frac{1}{2}}(\tilde{r})U(g) \Delta_R^{\frac{1}{2}}(r) \ket{r} \otimes I_S ,
}
and thus
\begin{align}
        U(\tilde{r},r)^\dagger &= \int_G \D g \, \bra{r} \otimes I_S\Delta_{R}^{\frac{1}{2}}(r)U(g)^\dagger \Delta_{\tilde{R}}^{\frac{1}{2}}(\tilde{r}) \ket{\tilde{r}} \otimes I_{\tilde{S}} \notag\\
        & = \mathcal{R}(r)\circ \tilde{\mathcal{R}}^{-1}(\tilde{r}) \label{eq:Udagger}\\
        &= U(r,\tilde{r}).\notag
\end{align}
It then follows that
\begin{equation}
\begin{aligned}\notag
      U(\tilde{r},r)^\dagger U(\tilde{r},r)
      &= \mathcal{R}(r) \!\circ\! \tilde{\mathcal{R}}^{-1}(\tilde{r}) \!\circ \!\tilde{\mathcal{R}}(\tilde{r}) \!\circ\! \mathcal{R}^{-1}(r)\\
      &= \mathcal{R}(r) \circ  \mathcal{R}^{-1}(r)\\
      &= I_{S|R},
\end{aligned}
\end{equation}
where the second equality follows from Eq.~\eqref{eq:Iphys} and the fourth equality from the observation that
\eq{\label{eq:rightInverse}
        \mathcal{R}(r) \circ \mathcal{R}^{-1}(r) \ket{\psi_S(r)} &= \mathcal{R}(r) \circ \mathcal{R}^{-1}(r) \circ \mathcal{R}(r) \ket{\Psi} \\
   &= \mathcal{R}(r) \ket{\Psi} \\
&= \ket{\psi_S(r)}.
}
This establishes that $U(\tilde{r},r)$ is an isometry, consistent with Theorem~\ref{theorem:isometry}. Similarly, we find
\begin{equation}
    \begin{aligned}\notag
        U(\tilde{r},r) U(\tilde{r},r)^\dagger = I_{\tilde{S}|\tilde{R}}.
    \end{aligned}
\end{equation}
Therefore, $U(\tilde{r},r)^\dagger   =U^{-1}(\tilde{r},r)$ is the left and right inverse of the map $U(\tilde{r},r)$, which is seen to be a unitary operator.
\end{proof}

\begin{corollary*}
    When $R$ and $\tilde{R}$ coincide the change-of-frame map changes the orientation of the state of $S$ relative $R$ from $r$ to~$r'$:
    \begin{equation}
        \ket{\psi_S(r')} = U(r',r) \ket{\psi_S(r)}.
        \label{eq:OrientationChange}
    \end{equation}
    Moreover, $U(r,r) = I_{S|C}$.
\end{corollary*}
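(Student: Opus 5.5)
The plan is to specialize Theorem~\ref{thm:two} and the definition in Eq.~\eqref{eq:change-frame-map} to the case $\tilde{R}=R$, $\tilde{S}=S$, $\tilde{\mathcal{R}}=\mathcal{R}$, and then to use only the left-inverse identity in Eq.~\eqref{eq:Iphys} together with the right-inverse observation in Eq.~\eqref{eq:rightInverse}. No new analytic input should be needed. The only subtlety is keeping track of the domains: $\mathcal{R}^{-1}(r)$ is applied to elements of $\mathcal{H}_{S|R}$ that lie in the image of $\mathcal{R}(r)$, and the result is a physical state.

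For Eq.~\eqref{eq:OrientationChange}, I will take an arbitrary $\ket{\psi_S(r)}\in\mathcal{H}_{S|R}$. By the definition of $\mathcal{H}_{S|R}$ as the image of the reduction map, it can be written as $\ket{\psi_S(r)}=\mathcal{R}(r)\ket{\Psi}$ for some $\ket{\Psi}\in\mathcal{H}_{\rm phys}$. Then
\begin{equation}\notag
U(r',r)\ket{\psi_S(r)}=\mathcal{R}(r')\circ\mathcal{R}^{-1}(r)\circ\mathcal{R}(r)\ket{\Psi}=\mathcal{R}(r')\ket{\Psi}=\ket{\psi_S(r')},
\end{equation}
where the middle equality is Eq.~\eqref{eq:Iphys}. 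This also shows that the conditional state at $r'$ is the one associated with the same physical state $\ket{\Psi}$. Its norm therefore agrees with that of $\ket{\psi_S(r)}$ by Theorem~\ref{theorem:isometry}, which serves as a consistency check with the unitarity of Theorem~\ref{thm:two}.

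For $U(r,r)=I_{S|R}$ (the statement's $I_{S|C}$ is presumably this identity, written with the clock notation used later), I will set $r'=r$ above: $U(r,r)\ket{\psi_S(r)}=\mathcal{R}(r)\circ\mathcal{R}^{-1}(r)\ket{\psi_S(r)}=\ket{\psi_S(r)}$, which is exactly Eq.~\eqref{eq:rightInverse}. Since this holds for every element of $\mathcal{H}_{S|R}$, $U(r,r)$ is the identity there.

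The only step I expect to need care is the well-definedness of $\mathcal{R}^{-1}(r)\circ\mathcal{R}(r)=I_{\rm phys}$, namely that inserting $U(g)^\dagger$ in Eq.~\eqref{eq:Iphys} is legitimate on generalized invariant states. The paper has already established this, so I will simply invoke it.
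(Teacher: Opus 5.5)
Your proposal is correct and follows the paper's proof essentially verbatim: write $\ket{\psi_S(r)}=\mathcal{R}(r)\ket{\Psi}$, collapse $\mathcal{R}^{-1}(r)\circ\mathcal{R}(r)$ via the left-inverse identity~\eqref{eq:Iphys}, and obtain $U(r,r)=I_{S|R}$ from the right-inverse observation~\eqref{eq:rightInverse}. You are also right that $I_{S|C}$ in the statement should be read as $I_{S|R}$.
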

\begin{proof}
    Equation~\eqref{eq:OrientationChange} follows from the definition of the reduction map and its inverse:
    \eq{
    U(r',r)\ket{\psi_S(r)} &= \left[ \mathcal{R}(r') \circ \mathcal{R}^{-1}(r) \right] \left[ \mathcal{R}(r) \ket{\Psi}\right] \\
    &= \mathcal{R}(r')  \ket{\Psi} \\
    &=\ket{\psi_S(r')}.\notag
    }
    Moreover, $U(r,r)=I_{S|C}$ follows immediately from Eq.~\eqref{eq:rightInverse}.
\end{proof}
The change-of-frame map is a generalization of the change-of-frame map defined in Ref.~\cite{hohntrinity2021} for one-dimensional gauge groups (see Refs.~\cite{hohn_equivalence_2021,hamette_quantum_2020,delahametteperspectiveneutral2021, hohn_matter_2024} for further generalizations; see also ~\cite{hoehn_how_2018,vanrietvelde_change_2019,hohn_how_2020,vanrietvelde_change_2020,vanrietvelde_switching_2023,Chataignier:2020}) to account for quantum reference frame interactions. In the non-interacting case when the gauge group is a one-dimensional Abelian group, this change-of-frame map reduces to the change-of-frame map introduced in Ref.~\cite{giacominiQuantumMechanicsCovariance2019}, as shown in Ref.~\cite{hohntrinity2021}. In Sec.~\ref{Sec:Applications} we apply the change-of-frame map to temporal and spatial reference frames in the presence of interactions.

In addition, system observables relative to the frame $R$ can be transformed to another frame $\tilde{R}$ with the change-of-frame map. Consider an observable $A$ of $S$ described by the set of effect densities $\{E_S(a) \in \mathcal{E}(\mathcal{H}_{S|R}) \ | \ a \in \Omega_A\}$ relative to $R$ in the orientation $r \in G_R$. The change-of-frame map transforms the effect densities defining $A$ to effect densities defining an observable $\tilde{A}$ of $\tilde{S}$ relative to $\tilde{R}$ in the orientation $\tilde{r} \in G_R$ given by
\eq{\label{eq:AtildeA}
E_{\tilde{S}}(a) =  U(\tilde{r},r) E_{\tilde{S}}(a) U(\tilde{r},r)^\dagger.
}
It follows that the statistics of $\tilde{A}$ when evaluated with $\ket{\psi_{\tilde{S}|\tilde{R}}(\tilde{r})} \in \mathcal{H}_{\tilde{S}|\tilde{R}}$ are the same as the statistics of $A$ when evaluated with $\ket{\psi_{S|R}(r)} \in \mathcal{H}_{\tilde{S}|\tilde{R}}$, where $\ket{\psi_{\tilde{S}|\tilde{R}}(\tilde{r})} = U(\tilde{r},r) \ket{\psi_{S|R}(r)}$. Moreover, the algebra of observables of $\tilde{S}$ relative to $\tilde{R}$ is isomorphic to the algebra of observables of $S$ relative to $R$, that is, $\mathcal{E}(\mathcal{H}_{S|R}) \simeq \mathcal{E}(\mathcal{H}_{\tilde{S}|\tilde{R}})$.

\section{Relational Dirac observables}
\label{Sec:Relational Dirac observables}
The relational description described in Sec.~\ref{Sec:InteractingQRF} can be formulated entirely in terms of the physical Hilbert space. That is,  the statistics of measurement outcomes determined from the conditional state $\ket{\psi_S(r)} \in \mathcal{H}_{S|R}$ of $S$ relative to $R$  can be lifted to the physical Hilbert space and expressed in terms of the expectation values of gauge invariant observables, the so-called relational Dirac observables~\cite{Tambornino:2011,hohntrinity2021,hohn_equivalence_2021,delahametteperspectiveneutral2021}. Gauge invariants are also called complete observables, and throughout the literature they have also been referred to as ``evolving constants of motion'' and ``perennials''~\cite{RovelliECM1,RovelliECM2,rovelli_partial_2002,kuchar_time_2011,dittrich_partial_2006,dittrich_partial_2007,Tambornino:2011,thiemann_modern_2008}.

The following theorem establishes that the statistics of a system observable and a conditional state defined relative to a choice of reference frame and orientation are equivalent to the statistics of induced effect operators on the physical Hilbert space, which are relational Dirac observables.
\begin{theorem}\label{theo:relobs}
    The probability that a system observable $A$, with outcome space $\Omega_A$, event space $\mathcal{F}_A$, and effect operators $\{E_S(a) \in \mathcal{E}(\mathcal{H}_{S|R}) \ | \ a\in \mathcal{F}_A\}$, takes the value $a$ given that the frame has orientation $r\in G_R$ is defined as
  \eq{
        P(a \,{\rm when}\, r) &\ce \bra{\psi_{S}(r)} E_S(a) \ket{\psi_S(r)} \\
&= \bra{\Psi}F_A(r;a) \ket{\Psi}_{\rm phys},
\label{eq:onetimeprobability}
}
where
\begin{equation}
    \begin{aligned}
        F_A(r;a) &\ce \mathcal{R}^{-1}(r)E_S(a)\mathcal{R}(r) \\
        &=\mathcal{G}\left[\Delta_R^{\frac{1}{2}}(r) E_R(r)\otimes   E_S(a)  \Delta_R^{\frac{1}{2}}(r)\right],
        \label{eq:relationalDiracObservable}
    \end{aligned}
\end{equation}
is an effect operator on the physical Hilbert space associated with a gauge-invariant observable, $F_A(r;a) \in \mathcal{E}(\mathcal{H}_{\rm phys})$, which defines a relational Dirac observable that characterizes the statistics of the system observable $A$ relative to a reference frame $R$ in the orientation $r \in G_R$. 
\end{theorem}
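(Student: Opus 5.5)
The strategy is to lift the expectation value from $\mathcal{H}_{S|R}$ to $\mathcal{H}_{\rm phys}$ using the reduction map and its left inverse, and then to evaluate $\mathcal{R}^{-1}(r)E_S(a)\mathcal{R}(r)$ explicitly as a $G$-twirl. The argument has three parts: an adjoint relation for $\mathcal{R}(r)$, the explicit twirl formula, and the effect-operator property of $F_A(r;a)$.

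\emph{Adjoint relation.} The key identity is that the adjoint of $\mathcal{R}(r)$ with respect to the physical inner product on $\mathcal{H}_{\rm phys}$ and the induced inner product on $\mathcal{H}_{S|R}$ is $\mathcal{R}^{-1}(r)$. To show this, take $\ket{\Psi}\in\mathcal{H}_{\rm phys}$ and $\ket{\phi}\in\mathcal{H}_{S|R}$. By the definition of the group-averaged inner product, Eq.~\eqref{eq:PhysInnerProduct},
\[
\bra{\Psi}\mathcal{R}^{-1}(r)\ket{\phi}_{\rm phys}=\bra{\Psi}\Delta_R^{\frac12}(r)\ket{r}\otimes I_S\ket{\phi}_{\rm kin},
\]
since $\mathcal{R}^{-1}(r)\ket{\phi}$ is the group average of the kinematical vector $\Delta_R^{\frac12}(r)\ket{r}\otimes\ket{\phi}$. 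The right-hand side equals $\braket{\mathcal{R}(r)\Psi|\phi}$, which establishes $\mathcal{R}(r)^\dagger=\mathcal{R}^{-1}(r)$. This step needs care with the rigging, because $\ket{\Psi}\in\Phi^*$ while the inner product is defined through representatives. I would handle it by writing $\ket{\Psi}=\int \D g\,U(g)\ket{\psi}$ and using unimodularity to move the average.

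\emph{First equality.} With the adjoint relation in hand,
\[
\bra{\psi_S(r)}E_S(a)\ket{\psi_S(r)}=\bra{\Psi}\mathcal{R}^{-1}(r)E_S(a)\mathcal{R}(r)\ket{\Psi}_{\rm phys}.
\]
This is the first claimed equality, with $F_A(r;a)=\mathcal{R}^{-1}(r)E_S(a)\mathcal{R}(r)$.

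\emph{Second equality.} Compose the definitions directly:
\[
\mathcal{R}^{-1}(r)E_S(a)\mathcal{R}(r)=\int_G\D g\,U(g)\,\Delta_R^{\frac12}(r)\bigl(E_R(r)\otimes E_S(a)\bigr)\Delta_R^{\frac12}(r)
\]
acting on $\mathcal{H}_{\rm phys}$. Inserting $U(g)^\dagger$ on the right, which is allowed because $U(g)^\dagger\ket{\Psi}=\ket{\Psi}$ as in Eq.~\eqref{eq:Iphys}, gives exactly $\mathcal{G}[\Delta_R^{\frac12}(r)E_R(r)\otimes E_S(a)\Delta_R^{\frac12}(r)]$. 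Gauge invariance of this operator follows from left-invariance of the Haar measure.

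\emph{Effect-operator property.} Positivity holds because $F_A=\mathcal{R}(r)^\dagger E_S(a)\mathcal{R}(r)$ with $E_S(a)\geq0$. The bound $F_A\le I_{\rm phys}$ follows from $E_S(a)\le I_{S|R}$ together with $\mathcal{R}(r)^\dagger\mathcal{R}(r)=I_{\rm phys}$, which holds by Theorem~\ref{theorem:isometry} and Eq.~\eqref{eq:Iphys}. Normalization and additivity of $a\mapsto F_A(r;a)$ are inherited linearly from $E_S$, and $F_A(r;\Omega_A)=\mathcal{G}[\Xi_R(r)]=I$ by Eq.~\eqref{eq:gf}.

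The main obstacle is the adjoint identity: making rigorous that the physical inner product with a group-averaged vector reduces to a kinematical pairing with the unaveraged seed, given that $\ket{\Psi}$ is only a generalized state.
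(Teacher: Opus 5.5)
Your argument is correct, but it is organized differently from the paper's. The paper computes the twirl formula exactly as you do. For the probability identity, however, it uses the \emph{kinematical} adjoint $\mathcal{R}(r)^\dagger=\Delta_R^{\frac12}(r)\ket{r}\otimes I_S$. This gives $\bra{\Psi}\Delta_R^{\frac12}(r)E_R(r)\otimes E_S(a)\Delta_R^{\frac12}(r)\ket{\Psi}_{\rm kin}$, which the paper converts to $\bra{\Psi}F_A(r;a)\ket{\Psi}_{\rm phys}$ by running Eq.~\eqref{eq:PhysInnerProduct2} in reverse: replace the bra by its seed, twirl, and use gauge invariance. The effect property is then attributed to linearity and complete positivity of the $G$-twirl.

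You instead isolate the lemma $\mathcal{R}(r)^{\dagger_{\rm phys}}=\mathcal{R}^{-1}(r)$. It rests on the same seed-moving step, but once you have it the probability identity no longer depends on the twirl formula. It also gives $0\le F_A(r;a)\le I_{\rm phys}$ directly in the physical inner product, which is the order relevant to $\mathcal{E}(\mathcal{H}_{\rm phys})$. The twirl argument a priori controls kinematical positivity, and transferring it needs the identity $\bra{\Psi}\mathcal{G}[X]\ket{\Psi}_{\rm phys}=\bra{\Psi}X\ket{\Psi}_{\rm kin}$. Your lemma also yields Eq.~\eqref{eq:Udagger} and Theorem~\ref{theo:homomorphism} at once. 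The paper's route, in turn, keeps the Faddeev--Popov structure of the expectation value visible at the kinematical level.

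Two small remarks. First, the step you flag as the main obstacle is, at the paper's level of rigor, just the definition Eq.~\eqref{eq:PhysInnerProduct} together with unimodularity. The seed $\Delta_R^{\frac12}(r)\ket{r}\otimes\ket{\phi}$ is generally non-normalizable, but the paper takes the same liberty with $\Xi_R(r)$. Second, normalization is cleaner as $F_A(r;\Omega_A)=\mathcal{R}^{-1}(r)\mathcal{R}(r)=I_{\rm phys}$ via Eq.~\eqref{eq:Iphys}. Since $E_S(\Omega_A)=I_{S|R}$ rather than $I_S$, $\mathcal{G}[\Xi_R(r)]$ agrees with $F_A(r;\Omega_A)$ only on physical states.
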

\begin{proof}
We begin by establishing the second equality in Eq.~\eqref{eq:relationalDiracObservable} by observing that 
\begin{align}
    F_A(r;a)&\ce\mathcal{R}^{-1}(r)E_S(a)\mathcal{R}(r) \notag\\
    &= \int_G \D g \, U(g) \Delta_R^{\frac{1}{2}}(r)  \ket{r}\!\bra{r} \otimes E_S(a) \Delta_R^{\frac{1}{2}}(r)\notag\\
    &= \int_G \D g \, U(g) \Delta_R^{\frac{1}{2}}(r)  E_R(r) \otimes E_S(a) \Delta_R^{\frac{1}{2}}(r) U(g)^\dagger \notag\\
    &=\mathcal{G}\left[ \Delta_R^{\frac{1}{2}}(r)  E_R(r) \otimes E_S(a) \Delta_R^{\frac{1}{2}}(r)\right].
\end{align}
Then, by direct computation, the second equality in Eq.~\eqref{eq:onetimeprobability} is established:
\begin{align}
        P(a \, {\rm when}\,r)  
        &= \bra{\psi_{S}(r)} E_S(a) \ket{\psi_S(r)} \notag\\
        &= \bra{\Psi}\mathcal{R}(r)^\dagger  E_S(a)\mathcal{R}(r)\ket{\Psi}_{\rm kin}\notag\\
        &= \bra{\Psi} \Delta_R^{\frac{1}{2}}(r) \ket{r}\!\bra{r}\otimes   E_S(a)  \Delta_R^{\frac{1}{2}}(r)\ket{\Psi}_{\rm kin} \notag\\
&=\bra{\psi}\mathcal{G}\left[\Delta_R^{\frac{1}{2}}(r) E_R(r)\otimes   E_S(a)  \Delta_R^{\frac{1}{2}}(r)\right]\ket{\Psi}_{\rm kin}\notag\\
&= \bra{\Psi}F_A(r;a) \ket{\Psi}_{\rm phys}.
\end{align}

Given that $F_A(r;a)$ is defined as the $G$-twirl of $\Delta_R^{\frac{1}{2}}(r) E_R(r) \otimes   E_S(a)  \Delta_R^{\frac{1}{2}}(r)$, it immediately follows that $[F_A(r;a),U(g)] = 0$, ensuring that it is a gauge-invariant operator acting on $\mathcal{H}_{\rm phys}$. Furthermore, due to the linearity and complete positivity of the $G$-twirl and that $\{ E_S(a) \in \mathcal{E}(\mathcal{H}_{S|R}) \ | \ a \in \mathcal{F}_A\}$ is a POVM on $\mathcal{H}_{S|R}$, it also follows that $\{F_A(r;a) \in \mathcal{E}(\mathcal{H}_{\rm phys}) \ | \ a\in \mathcal{F}_A\}$ is a POVM on $\mathcal{H}_{\rm phys}$, where the effect operator $F_A(r;a)$ is associated with outcome $a\in \Omega_A$ when $R$ is in the orientation $r\in G_R$.
\end{proof}

\begin{figure}
    \centering   \includegraphics[width=\linewidth]{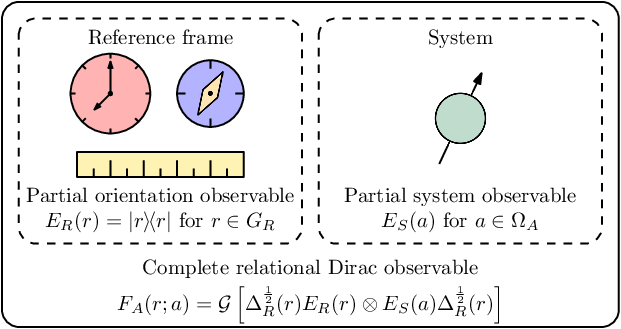}
    \caption{A pictorial representation of a complete, gauge-invariant relational Dirac observable $F_R(r;a)$ that describes a partial system observable with effect operators $E_S(a)$ for $a \in \Omega_A$ relative to a reference frame with partial orientation observable with effect densities $E_R(r) = \ket{r}\!\bra{r}$ for $r\in G_R$.}
    \label{fig:RelationalObservable}
\end{figure}
We see that, in general, $E_R(r)$ and $E_S(a)$ do not lead by themselves to meaningful probability distributions, as they are not gauge invariant. Theorem~\ref{theo:relobs} establishes that it is the gauge-invariant component of the product $\Delta_R^{\frac12}(r)E_R(r)\otimes E_S(a)\Delta_R^{\frac12}(r)$ that has a physical and relational meaning, which can be understood in terms of $E_S(a)$ and the conditional states relative to the chosen frame and orientation. In this way, the gauge-invariant theory allows for probabilistic predictions about suitably defined correlations between the reference frame and system observables, not of these observables by themselves. For this reason, we can refer to $E_R(r)$ and $E_S(a)$ as `partial observables' (their relations are observable), while $F_A(r;a)$ is a `complete observable,' which can be measured and predicted (see Fig.~\ref{fig:RelationalObservable}). The notion of partial and complete observables has been discussed at length in the literature \cite{rovelli_partial_2002,dittrich_partial_2006,dittrich_partial_2007, rovelli_why_2014,thiemann_modern_2008}.

The significance of Theorem~\ref{theo:relobs} is that it establishes the equivalence of the Page--Wootters formalism (i.e., conditional states) and relational Dirac observables in the presence of interactions between the system and quantum reference frame. This extends the previously established equivalence of these formalisms in the non-interacting case~\cite{hohntrinity2021,hohn_equivalence_2021,delahametteperspectiveneutral2021}. In addition, Eq.~\eqref{eq:relationalDiracObservable} provides a generalized relativization map of the partial system observable $E_S(a)$ relative to the partial orientation observable $E_R(r) = \ket{r}\!\bra{r}$, producing a complete relational Dirac observable $F_A(r;a)$. This relativization map reduces to the \$-map~\cite{bartlett_reference_2007},  \textyen-map~\cite{loveridge_symmetry_2018,glowacki_quantum_2023}, and group quantization~\cite{hohntrinity2021,hohn_equivalence_2021,delahametteperspectiveneutral2021} of the series expression of classical relational Dirac observables~\cite{dittrich_partial_2006,dittrich_partial_2007,thiemann_modern_2008,hohntrinity2021} in the non-interacting case (i.e., $\Delta_R(r) = I_R$). A previous version of our definition of quantum relational observables (constructed with a Faddeev--Popov determinant operator) was discussed in \cite{Chataignier:2019,Chataignier:2020} in the context of quantum gravity and cosmology. In \cite{Chataignier:2020}, a change of quantum reference frames was described as a change of basis in the physical Hilbert space. 

It is worthwhile to note that the definition of relational observables via the $G$-twirl given in Eq.~\eref{relationalDiracObservable} (see also \cite{Chataignier:2019,Chataignier:2020}) can be seen as the quantum analogue of the classical integral expression
\eq{
F[f|\chi] = \boldsymbol{\Delta}_{\rm FP}[A]\int\mathcal{D}\varepsilon\,\delta\!\left(\chi[A^\varepsilon]\right)f[A],
}
which produces a gauge-invariant extension of the functional $f[A]$ (partial observable) in a general gauge theory. If $f[A]\equiv1$, this integral formula reduces to the Faddeev--Popov resolution of the identity given in Eq.~\eref{FP}.

The following theorem establishes that the relativization map in Eq.~\eqref{eq:relationalDiracObservable} is a homomorphism between observables defined on $\mathcal{H}_{S|R}$ and relational Dirac observables defined $\mathcal{H}_{\rm phys}$, which generalizes a similar result previously established in the absence of frame-system interactions (i.e., tensor product representations of the gauge group)~\cite{hohntrinity2021,loveridge_symmetry_2018, delahametteperspectiveneutral2021}.
\begin{theorem}\label{theo:homomorphism}
    The map 
    \begin{equation}
        \begin{aligned}
            F_A(r) : \mathcal{L}(\mathcal{H}_{S|R}) &\to \mathcal{L}(\mathcal{H}_{\rm phys}), \\
            A & \mapsto F_A(r) = \mathcal{R}^{-1}(r) A \mathcal{R}(r),
        \end{aligned}
    \end{equation}
    which encodes the algebra of observables of $S$ relative to $R$ at the orientation $r$,  is a $*$-algebra homomorphism. That is, 
    $F_{AB+C}(r) = F_{A}(r) F_{B}(r) +F_{C}(r)$
    and
    $F_{A^\dagger}(r) = F_A(r)^\dagger$, for $A,B,C \in \mathcal{L}(\mathcal{H}_{S|R})$.
\end{theorem}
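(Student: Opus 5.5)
The plan is to derive everything from two facts already established: the left-inverse property $\mathcal{R}^{-1}(r)\circ\mathcal{R}(r)=I_{\rm phys}$ in Eq.~\eqref{eq:Iphys}, and the right-inverse property $\mathcal{R}(r)\circ\mathcal{R}^{-1}(r)=I_{S|R}$ on $\mathcal{H}_{S|R}$ in Eq.~\eqref{eq:rightInverse}. These say that $\mathcal{R}(r)$ is a bijection $\mathcal{H}_{\rm phys}\to\mathcal{H}_{S|R}$ with two-sided inverse $\mathcal{R}^{-1}(r)$. The map $A\mapsto F_A(r)=\mathcal{R}^{-1}(r)A\mathcal{R}(r)$ is therefore a similarity transformation, and the algebraic part of the statement is essentially automatic.

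First I would check linearity. $F_{A+C}(r)=F_A(r)+F_C(r)$ and $F_{\lambda A}(r)=\lambda F_A(r)$ follow directly from linearity of $\mathcal{R}(r)$ and $\mathcal{R}^{-1}(r)$. Next I would check multiplicativity:
\begin{equation*}
F_A(r)F_B(r)=\mathcal{R}^{-1}(r)A\,\mathcal{R}(r)\mathcal{R}^{-1}(r)\,B\mathcal{R}(r)=\mathcal{R}^{-1}(r)AB\,\mathcal{R}(r)=F_{AB}(r).
\end{equation*}
The middle step uses Eq.~\eqref{eq:rightInverse}, which applies because $B\mathcal{R}(r)\ket{\Psi}$ lies in $\mathcal{H}_{S|R}$, since $B\in\mathcal{L}(\mathcal{H}_{S|R})$. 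Combining the two facts gives $F_{AB+C}(r)=F_A(r)F_B(r)+F_C(r)$.

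The $*$-property is the main obstacle, because the adjoint on the left is taken in the physical inner product, not the kinematical one. I would prove it at the level of matrix elements. For $\ket{\Psi_1},\ket{\Psi_2}\in\mathcal{H}_{\rm phys}$, write $\ket{\psi_{S,i}(r)}=\mathcal{R}(r)\ket{\Psi_i}$. The key identity is that for any $\ket{\Phi}\in\mathcal{H}_{\rm phys}$,
\begin{equation*}
\braket{\Psi_1|\Phi}_{\rm phys}=\braket{\mathcal{R}(r)\Psi_1|\mathcal{R}(r)\Phi}.
\end{equation*}
This is the polarized form of Theorem~\ref{theorem:isometry}, or equivalently Eq.~\eqref{eq:PhysInnerProduct2} with $\Xi_R(r)$ written as $\mathcal{R}(r)^\dagger\mathcal{R}(r)$ restricted to physical states.

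I would apply this identity with $\ket{\Phi}=F_{A^\dagger}(r)\ket{\Psi_2}$. Since $\mathcal{R}(r)\mathcal{R}^{-1}(r)=I_{S|R}$, we get $\mathcal{R}(r)\ket{\Phi}=A^\dagger\ket{\psi_{S,2}(r)}$. Hence
\begin{equation*}
\bra{\Psi_1}F_{A^\dagger}(r)\ket{\Psi_2}_{\rm phys}=\bra{\psi_{S,1}(r)}A^\dagger\ket{\psi_{S,2}(r)}=\overline{\bra{\psi_{S,2}(r)}A\ket{\psi_{S,1}(r)}}.
\end{equation*}
Running the same argument backwards, the right-hand side equals $\overline{\bra{\Psi_2}F_A(r)\ket{\Psi_1}_{\rm phys}}$. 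This shows $F_{A^\dagger}(r)=F_A(r)^\dagger$ with respect to $\braket{\cdot|\cdot}_{\rm phys}$.

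One subtlety must be handled explicitly: $F_A(r)$ maps $\mathcal{H}_{\rm phys}$ into itself. This holds because $\mathcal{R}^{-1}(r)$ contains the group average $\int_G\D g\,U(g)$, and so produces gauge-invariant states. Equivalently, it follows from the $G$-twirl form in Theorem~\ref{theo:relobs}. With this settled, both adjoints are taken on the same space, and the proof is complete.
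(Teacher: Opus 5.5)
Your proof is correct. The linearity and multiplicativity part is the same as the paper's: linearity of the maps plus Eq.~\eqref{eq:rightInverse}. Your remark that $B\mathcal{R}(r)\ket{\Psi}$ stays in $\mathcal{H}_{S|R}$ is exactly what makes that step legitimate.

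For the $*$-property you take a different route. The paper rewrites $F_{A^\dagger}(r)$ in the $G$-twirl form of Theorem~\ref{theo:relobs}, $\mathcal{G}[\Delta_R^{\frac12}(r)\ket{r}\!\bra{r}\otimes A^\dagger\Delta_R^{\frac12}(r)]$. It then uses that the twirl commutes with taking adjoints, $\mathcal{G}[X^\dagger]=\mathcal{G}[X]^\dagger$. You instead use only that $\mathcal{R}(r)$ is an isometric bijection $\mathcal{H}_{\rm phys}\to\mathcal{H}_{S|R}$. Then $\mathcal{R}^{-1}(r)$ is its adjoint with respect to the physical and conditional inner products, and $A\mapsto\mathcal{R}^{-1}(r)A\mathcal{R}(r)$ is conjugation by a unitary.

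The paper's version is a one-liner and makes the gauge-invariant, relativization-map structure of $F_A(r)$ explicit. It does, however, tacitly rely on two things:
\begin{itemize}
\item $A$ must be extended from $\mathcal{H}_{S|R}$ to $\mathcal{H}_S$ so that $\ket{r}\!\bra{r}\otimes A$ is defined.
\item It takes a kinematical adjoint of an operator whose twirl form is only valid on physical states. This needs the unstated fact that, for operators commuting with $U(g)$, the kinematical adjoint induces the adjoint with respect to $\braket{\cdot|\cdot}_{\rm phys}$.
\end{itemize}
Your argument avoids both points, because it computes the adjoint directly in the physical inner product and never leaves $\mathcal{H}_{S|R}$. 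The only extra input it needs is that $F_A(r)$ maps $\mathcal{H}_{\rm phys}$ into itself, and you handle that correctly.
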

\begin{proof}
    Observe that
\eq{
              F_{AB+C}(r) &=  \mathcal{R}^{-1}(r) (AB+C) \mathcal{R}(r) \\
              &=  \mathcal{R}^{-1}(r) A\mathcal{R}(r)\circ \mathcal{R}^{-1}(r)B\mathcal{R}(r)\\
              &\ +\mathcal{R}^{-1}(r)C \mathcal{R}(r) \\
               &=F_{A}(r) F_{B}(r) +F_{C}(r),
}
    where the linearity of the reduction map and its inverse was used and Eq.~\eqref{eq:rightInverse} was used to arrive at the second equality. 

    Moreover,
    \begin{equation}
        \begin{aligned}
            F_{A^\dagger}(r) &= \mathcal{R}^{-1}(r) A^\dagger \mathcal{R}(r) \\
            &= \mathcal{G}\left[\Delta_R^{\frac{1}{2}}(r) \ket{r}\!\bra{r}\otimes   A^\dagger  \Delta_R^{\frac{1}{2}}(r)\right] \\
            &= \mathcal{G}\left[\Delta_R^{\frac{1}{2}}(r) \ket{r}\!\bra{r}\otimes   A  \Delta_R^{\frac{1}{2}}(r)\right]^\dagger \\
            &=F_{A}(r)^\dagger.
        \end{aligned}
    \end{equation}
    
    The self-adjoint operators $A,B,C \in \mathcal{L}(\mathcal{H_{S|R}})$ can be taken to represent sharp observables. However, the homomorphism extends to effect operators describing unsharp observables characterized by POVMs because $\mathcal{E}(\mathcal{H_{S|R}}) \subset\mathcal{L}(\mathcal{H_{S|R}})$. That is, the map $F_A(r;a) \ce \mathcal{R}^{-1}(r) E_S(a)\mathcal{R}(r)$ is also a $*$-algebra homomorphism.
\end{proof} 

Notice that, given the transformation of system observables between frames $R$ and $\tilde{R}$ in Eq.~\eref{AtildeA}, we can lift the effect operators $E_S(a)$ and $E_{\tilde{S}}(a)$ associated with the observables $A$ and $\tilde{A}$ to effect operators $F_A(r;a)$ and $F_{\tilde{A}}(\tilde{r};a)$ on the physical Hilbert space. Using Eqs.~\eref{Iphys},~\eref{change-frame-map} and~\eref{Udagger} together with Eq.~\eref{AtildeA}, we find the identity
\begin{align}
F_{\tilde{A}}(\tilde{r};a) &= \mathcal{\tilde{R}}^{-1}(\tilde{r})E_{\tilde{S}}(a)\mathcal{\tilde{R}}(\tilde{r})\notag\\
&= \mathcal{\tilde{R}}^{-1}(\tilde{r})U(\tilde{r},r)E_{S}(a)U(\tilde{r},r)^{\dagger}\mathcal{\tilde{R}}(\tilde{r})\notag\\
&= \left[\mathcal{\tilde{R}}^{-1}(\tilde{r})\circ\mathcal{\tilde{R}}(\tilde{r})\right]F_{A}(r;a)\left[\mathcal{\tilde{R}}^{-1}(\tilde{r})\circ\mathcal{\tilde{R}}(\tilde{r})\right]\notag\\
&= F_{A}(r;a).\label{eq:relobs-change-frame}
\end{align}
We thus see that the relational Dirac observables $F_{\tilde{A}}(\tilde{r};a)$ and $F_{A}(r;a)$ are equal, meaning they encode the same gauge-invariant information associated with the correlation between a system observable, represented either by $A$ in the reference frame $R$ or by $\tilde{A}$ in the reference frame $\tilde{R}$, and a frame orientation, represented by $r$ in $R$ or $\tilde{r}$ in $\tilde{R}$. Indeed, the statistics are the same regardless of which frame is chosen in the reduction to conditional states, since Eqs.~\eref{onetimeprobability} and~\eref{relobs-change-frame} together imply
\eq{
P(a \,{\rm when}\, r) &\ce \bra{\psi_{S}(r)} E_S(a) \ket{\psi_S(r)} \\
&= \bra{\Psi}F_A(r;a) \ket{\Psi}_{\rm phys}\\
&= \bra{\Psi}F_{\tilde{A}}(\tilde{r};a) \ket{\Psi}_{\rm phys}\\
&= \bra{\psi_{\tilde{S}}(\tilde{r})} E_{\tilde{S}}(a) \ket{\psi_{\tilde{S}}(\tilde{r})}\\
&=P(a \,{\rm when}\, \tilde{r}).
}

\section{Applications}
\label{Sec:Applications}

\subsection{Temporal frames}
\label{Sec:TemporalFrames}
We now apply the formalism developed in Secs.~\ref{Sec:InteractingQRF} and \ref{Sec:Relational Dirac observables} to temporal reference frames (i.e., clocks) by supposing that the gauge group $G$ is a one-dimensional Lie group. Despite this simplification, such gauge groups are sufficient to describe the dynamics of standard Hamiltonian systems~\cite{dirac_lectures_1964, rovelli_quantum_2004}. In particular, if a theory does not single out a preferred clock, as is the case in general relativity, then it ought to be invariant under relabeling of the employed time variable, which corresponds to invariance under the one-dimensional Lie group $G$ generated by the Hamiltonian $H$ of the theory. By construction, the kinematical Hilbert space $\mathcal{H}_{\rm kin}$ carries a unitary representation of $G$, denoted as $U(g) = e^{-igH}$ for all $g\in G$. The physical Hilbert space $\mathcal{H}_{\rm phys}$ is the set of states that are invariant under the unitary action of $G$, $U(g)\ket{\Psi} = \ket{\Psi}$, equipped with the physical inner product given in Eq.~\eref{PhysInnerProduct}. Differentiating $\ket{\Psi} = U(g)\ket{\Psi} = \e^{-i g H}\ket{\Psi}$ with respect to $g$, one finds the constraint equation
\eq{\label{eq:WDW}
H\ket{\Psi} = 0 \ ,
}
and from $U(g)\int_G\D g'\,U(g') = \int_G\D g''\, U(g'')$, one similarly finds 
\eq{\label{eq:WDW-projector}
H\int_G\D g\, U(g) = 0 \ .
}
We partition the kinematical Hilbert space as $\mathcal{H}_{\rm kin} \simeq \mathcal{H}_C\otimes\mathcal{H}_S$, where $\mathcal{H}_C$ describes a clock and $\mathcal{H}_S$ is a system containing all other degrees of freedom. Without loss of generality, we take the Hamiltonian to be
\begin{equation}
    H \ce H_C \otimes I_S + I_C \otimes H_S + H_{\rm int},
    \label{eq:totalHamiltonian}
\end{equation}
where $H_C$, $H_S$, and $H_{\rm int}$ are self-adjoint operators that correspond to the clock, system, and interaction Hamiltonians, respectively.\footnote{Notice that, if $H_{\rm int}=0$, and $H_C$ and $H_S$ generate representations of $G$ then $U(g)$ is a tensor product representation of $G$; however, in what follows, we need not assume this is the case.} 

The group of temporal reference frames is generated by the clock Hamiltonian $H_C$ and represented on the clock factor of the kinematical Hilbert space as $U_C : G_C \to \mathcal{B}(\mathcal{H}_C),\ t \mapsto U_C(t) = e^{-itH_C}$; depending on the spectral properties of $H_C$, $G_C$ is either $S^1$ or $\mathbb{R}$. The time variable $t$ labels different orientations (i.e., times) of the clock. The orientation observable is known as a time observable~~\cite{holevo_probabilistic_1982,busch_quantum_2016,smith_time_2025}, and is covariant with respect to $G_C$ and its effect density is given by $\{E_C(t) = \ket{t}\!\bra{t} \ | \ t\in G_C\}$, where $\ket{t} = U_C(t)\ket{t_0}$ are group coherent states~\cite{perelomov_generalized_1986,zhang_coherent_1990}.

The conditional state of $S$ when the clock reads the time $t$ is given by Eq.~\eqref{eq:conditionalState}:
\eq{
\ket{\psi_S(t)} \ce \mathcal{R}(t)\ket{\Psi} = \bra{t} \otimes I_S \Delta^{\frac{1}{2}}(t) \ket{\Psi},
\label{eq:timedependentConditionalState}
}
where $\Delta(t)$ is the inverse of $\mathcal{G}[E_C(t) \otimes I_S]$, which was denoted as $\Delta_R(r)$ above, and $\ket{\Psi} \in \mathcal{H}_{\rm phys}$. Time evolution of the conditional state of $S$ relative to the clock $C$ is obtained by applying the change-of-frame map to change the orientation of the time variable from $t_1$ to $t_2$
\eq{
\ket{\psi_S(t_2)} = U_{S|C}(t_2,t_1) \ket{\psi_S(t_1)},
\label{eq:TimeEvolutionOperator}
}
where $U_{S|C}(t_2,t_1) \ce \mathcal{R}(t_2) \circ \mathcal{R}^{-1}(t_1)$ which can be verified by direct computation similar to the proof in the corollary following Theorem~\ref{thm:two}. Accordingly, we will refer to $U_{S|C}(t_2,t_1)$ as the time evolution operator, which is a special case of Eq.~\eqref{eq:OrientationChange}. As shown in Appendix~\ref{SchrodingerDerivation}, the time evolution operator can be written as the time-ordered exponential
\eq{
U_{S|C}(t_2,t_1) = \mathcal{T} \exp \left[-i\int_{t_1}^{t_2}dt \, H_{S|C}(t) \right],
\label{eq:timeorderedExponential}
}
where
\eq{
H_{S|C}(t) &\ce H_S \\
&\quad +\! \int_G \!\D g  \bra{t} \!\left[ H_{\rm int} \Delta(t) \!+\! i \frac{d\Delta^{\frac{1}{2}}(t)}{dt} \Delta^{\frac{1}{2}}(t)\right]\! U(g)\!\ket{t}\!,
\label{eq:HSC}
}
generates the time evolution of $S$ relative to $C$, and is thus interpreted as the Hamiltonian of $S$ relative to $C$. Equivalently, by differentiating Eq.~\eqref{eq:TimeEvolutionOperator} and using Eq.~\eqref{eq:timeorderedExponential}, the conditional state is seen to satisfy the Schr\"{o}dinger equation
\eq{
     i\frac{d}{dt} \ket{\psi_S(t)} = H_{S|C}(t) \ket{\psi_S(t)}.   \label{eq:SchrodingerEquation}
}
The Hamiltonian $H_{S|C}(t)$ has two terms: the first is simply the kinematical system Hamiltonian $H_S$, and the second is an integral operator that is $t$-dependent and gauge-nonlocal. On account of Theorem~\ref{thm:two}, the evolution described by the Schr\"{o}dinger equation is necessarily unitary for all $H_{\rm int}$ for which $\Delta(t)$ can be constructed. Let us now examine some important particular cases to illustrate this result.

The standard Schr\"odinger equation generated by $H_S$ is recovered when $C$ and $S$ do not interact, $H_{\rm int} = 0$, and the clock evolution and gauge groups coincide, $G_C = G$.\footnote{A discussion of the $G_C = S^1\neq G$ case for mechanical models with $H_{\rm int} = 0$ can be found in \cite{chataignier_relational_2026}.} Then, $U(g) = U_C(g) \otimes U_S(g)$, where $U_S(g) = e^{-igH_S}$. The  $G$-twirl of the orientation effect density $E_C(t)=\ket{t}\!\bra{t}$ reads
\begin{equation}
    \mathcal{G}[E_C(t)\otimes I_S] = \int_G dg \, U_C(g) E_C(t) U_C(g)^\dagger \otimes I_S = I_{\rm kin},
\end{equation}
and its inverse is trivial, $\Delta(t) = I_{\rm kin}$. This results in the second term in Eq.~\eqref{eq:HSC} vanishing and the usual Schr\"{o}dinger equation generated by $H_S$ is obtained. It is in this case that the dynamics of orthodox quantum theory is recovered, and the reduction map in Eq.~\eqref{eq:conditionalState} reduces to the standard Page--Wootters reduction map, 
\eq{
    \mathcal{R}(t) = \bra{t} \otimes I_S.
    \label{eq:PWreduction}
}

\subsubsection{Time-nonlocal Schr\"odinger equation}
Now consider the case in which $H_{\rm int}\neq0$. In \cite{smith_quantizing_2019}, a general interaction $H_{\rm int}\neq0$ was considered and a time-nonlocal Schr\"odinger equation,
\eq{\label{eq:nonlocal}
i\frac{d}{dt} \ket{\psi_S(t)} &= H_S \ket{\psi_S(t)}\\
&\quad +\int_{G_C}\D t'\,\bra{t}H_{\rm int}\ket{t'}\ket{\psi_S(t')},
}
was obtained for a conditional state defined with the standard Page--Wootters reduction map in Eq.~\eqref{eq:PWreduction}, $\ket{\psi_S(t)}\ce\bra{t}\otimes I_S\ket{\Psi}$. Clearly, this conditional state only agrees with the definition in Eq.~\eqref{eq:conditionalState} if $\Delta(t) =  I_{\rm kin}$. Otherwise, this standard reduction map will generally not lead to unitary evolution. As was noted in \cite{smith_quantizing_2019}, Eq.~\eref{nonlocal} does not describe unitary evolution for general interactions between $C$ and $S$. Here, we see that to ensure unitarity one must adopt the definition of the conditional state given in Eq.~\eqref{eq:conditionalState}, which leads to Eq.~\eqref{eq:SchrodingerEquation}. Not including the factor of $\Delta^{\frac12}(t)$ in Eq.~\eqref{eq:conditionalState} is the analogue of neglecting the proper gauge-fixing measure in gauge theories, which leads to well-known problems with unitarity~\cite{Feynman63,HT:book,Teitelboim1991,Barvinsky_1993}.

However, it is worthwhile to examine how the time-nonlocal Schr\"{o}dinger equation in Eq.~\eqref{eq:nonlocal} can be obtained as a particular case of Eq.~\eqref{eq:SchrodingerEquation}. First, we assume that $H_{\rm int}~\neq~0$, but $\Delta(t)\equiv\Delta$ is independent of $t$. In this case, Eqs.~\eqref{eq:SchrodingerEquation} and~\eqref{eq:HSC} lead to
\eq{\label{eq:pre-nonlocal}
i\frac{d}{dt} \ket{\psi_S(t)} &= H_S \ket{\psi_S(t)}\\
&\quad +\bra{t}H_{\rm int}\Delta\int_{G}\D g\,U(g)\ket{t}\!\bra{t}\otimes I_S\Delta^{\frac12}\ket{\Psi}.
}
Using the invariance of physical states, $\ket{\Psi} = U^{\dagger}(g)\ket{\Psi}$ and the fact that $\Delta$ commutes with $U(g)$, the last term in Eq.~\eref{pre-nonlocal} becomes
\eq{\notag
&\bra{t}H_{\rm int}\Delta\mathcal{G}[E_C(t)\otimes I_S]\Delta^{\frac12}\ket{\Psi}\\ 
&\hspace{10em} = \bra{t}H_{\rm int}\Delta^{\frac12}\ket{\Psi}\\
&\hspace{10em} = \int_{G_C}\D t'\bra{t}H_{\rm int}\ket{t'}\ket{\psi_S(t')},
}
where the completeness of the clock effect densities, $\int_{G_C} \D t \,\ket{t}\!\bra{t} = I_C$, was used together with Eq.~\eref{timedependentConditionalState} in going from the second to the third equality. Thus, Eq.~\eref{pre-nonlocal} becomes Eq.~\eref{nonlocal}.

\subsubsection{Weakly interacting clock and system}
If $G = G_C$ and the strength of the interaction between the clock and system is governed by a small parameter, $H_{\rm int} = \lambda \bar{H}_{\rm int}$ with $\lambda\ll1$, we can use perturbation theory. At lowest order in $\lambda$, we have $U_0(g) \ce U(g)|_{\lambda = 0} = U_C(g)\otimes U_S(g)$ and the $G$-twirl reads
\begin{align}
\mathcal{G}[E_C(t)\otimes I_S] &= \int_G \D g\, U_0(g)[E_C(t)\otimes I_S] U_0(g)^{\dagger}+\mathcal{O}(\lambda)\notag\\
&= I_{\rm kin} +\mathcal{O}(\lambda) .\label{eq:pert-G}
\end{align}
This also implies the identity
\begin{align}
 \int_G \D g\, U_0(g)E_C(t)\otimes I_S\ket{\Psi} &= \int_G\D g\,U_0(g)\ket{t}\ket{\psi_S(t)}\notag\\
 &= \ket{\Psi}+\mathcal{O}(\lambda) .\label{eq:pert-UE}
\end{align}
It is clear from Eq.~\eref{pert-G} that the $G$-twirl is positive definite and its inverse $\Delta(t)$ can be computed perturbatively, being $t$-independent at lowest order, $\Delta(t) = I_{\rm kin}+\mathcal{O}(\lambda)$, so $d \Delta(t)/ d t = \mathcal{O}(\lambda)$. With this, the Schr\"odinger equation~\eref{SchrodingerEquation} becomes
\eq{\label{eq:pert-schro}
\!\!i\frac{d}{dt} \ket{\psi_S(t)} &= H_{S} \ket{\psi_S(t)}\\
&\quad +\bra{t}\!\left(\lambda\bar{H}_{\rm int}+\frac{i}{2}\frac{d\Delta(t)}{d t}\right)\!\ket{\Psi} + \mathcal{O}(\lambda^2),
}
where we used the identity in Eq.~\eref{pert-UE}. From
\eq{
     \frac{d \Delta(t)}{dt}
     &= \frac{d}{dt} \mathcal{G}\left[ E_C(t) \otimes I_S\right]^{-1} \\
    &=- \Delta(t) \left(\frac{d}{dt} \mathcal{G}\left[ E_C(t) \otimes I_S\right] \right) \Delta(t)  \\
    &=-i \Delta(t)  \mathcal{G}\left[ \left[E_C(t),H_C\right] \otimes I_S\right]  \Delta(t),
}
together with $[\Delta(t),H] = 0$ and Eq.~\eref{WDW-projector}, we find
\eq{\label{eq:pert-delta-der}
\!\frac{i}{2}\frac{d\Delta(t)}{d t}\ket{\Psi} = \frac{\lambda}{2}\int_G\D gU_0(g)[\bar{H}_{\rm int},E_C(t)\otimes I_S]\ket{\Psi},
}
neglecting terms of order $\lambda^2$. Then, Eq.~\eref{pert-schro} can be written as the time-nonlocal equation
\eq{\label{eq:pert-nonlocal}
\!i\frac{d}{dt} \ket{\psi_S(t)} &= H_{S} \ket{\psi_S(t)}\\
&\quad +\lambda\int_{G_C}\D t'\braket{t|K|t'}\ket{\psi_S(t')}+\mathcal{O}(\lambda^2),
}
where 
\eq{\label{eq:pert-K}
\!\!K\ce \bar{H}_{\rm int}+\frac12\int_G\D gU_0(g)[\bar{H}_{\rm int},E_C(t)\otimes I_S]U_0(g)^{\dagger}.
}
Equation~\eref{pert-nonlocal} can be seen as a perturbatively corrected version of the previously discussed time-nonlocal Schr\"{o}dinger equation~\eref{nonlocal}, for which $\Delta$ was independent of time.

By construction, the dynamics of $\ket{\psi_S(t)}$ is unitary and the inner product between two conditional states is conserved under time evolution. Indeed, using Eqs.~\eref{pert-nonlocal} and~\eref{pert-K} for a pair of solutions $\ket{\psi_S(t)} = (\bra{t}\otimes I_S)\ket{\Psi}+\mathcal{O}(\lambda)$ and $\ket{\phi_S(t)} = (\bra{t}\otimes I_S)\ket{\Phi}+\mathcal{O}(\lambda)$ with $\ket{\Psi},\ket{\Phi}\in\mathcal{H}_{\rm phys}$, we find
\eq{\notag
\frac{d}{dt}\braket{\phi_S(t)|\psi_S(t)} &= i\lambda\bra{\Phi}K^{\dagger}(E_C(t)\otimes I_S)\ket{\Psi}\\
&\quad -i\lambda\bra{\Phi}(E_C(t)\otimes I_S) K\ket{\Psi}+\mathcal{O}(\lambda^2)\\
&= i\lambda\!\bra{\Phi}\![\bar{H}_{\rm int},E_C(t)\otimes I_S]\!\ket{\Psi}\\
&\quad+i\lambda\!\bra{\Phi}\![E_C(t)\otimes I_S,\bar{H}_{\rm int}]\!\ket{\Psi}+\mathcal{O}(\lambda^2)\\
&= \mathcal{O}(\lambda^2) ,
}
where we also used Eq.~\eref{pert-UE}.

An interesting case is that in which $[H_C\otimes I_S,H_{\rm int}] = [I_C\otimes H_S,H_{\rm int}] = 0$, so the clock, system, and interaction Hamiltonians commute with one another. Then Eq.~\eref{pert-K} simplifies to
\eq{
\!\!K &= \bar{H}_{\rm int}+\frac12\int_{G_C}\D gU_0(g)[\bar{H}_{\rm int},E_C(t)\otimes I_S]U_0(g)^{\dagger}\\
&= \bar{H}_{\rm int}+[\bar{H}_{\rm int},I_{\rm kin}]\\
&=\bar{H}_{\rm int},
}
given that $G = G_C$. With this, Eq.~\eref{pert-nonlocal} coincides with the time-nonlocal Schr\"{o}dinger equation~\eref{nonlocal} up to terms of order $\lambda^2$. This is because $\Delta(t)$ is independent of $t$ in this case. Indeed, we find
\eq{\notag
\Delta^{-1}(t)&\ce\mathcal{G}[\ket{t}\!\bra{t} \otimes I_S]\\
&= \int_G \D g \, U(g)\ket{t}\!\bra{t} \otimes I_S U(g)^\dagger \\ 
&= \int_{G_C} \D g \, e^{-igH_{\rm int}}\ket{t+g}\!\bra{t+g} \otimes I_S e^{igH_{\rm int}} \\ 
&= I_{\rm kin}+i\lambda[T\otimes I_S,\bar{H}_{\rm int}]+\mathcal{O}(\lambda^2),
}
where
\eq{
T = \int_{G_C}\mathrm{d}t\,t\ket{t}\!\bra{t}
}
is the first moment of the clock POVM. We thus see that $\Delta^{-1}(t)$ is independent of $t$ at this order of approximation with the $t$-independent inverse
\eq{
\Delta = I_{\rm kin}-i\lambda[T\otimes I_S,\bar{H}_{\rm int}]+\mathcal{O}(\lambda^2).
}
Precisely this kind of conditional dynamics is obtained for gravitationally interacting clocks and systems~\cite{smith_quantizing_2019,castro-ruiz_quantum_2020,cafasso_quantum_2024}. The interaction Hamiltonian is of the form $H_{\rm int} = -G H_C\otimes H_S/(c^4 d)$, which commutes with $H_C\otimes I_S$ and $I_C\otimes H_S$. The role of $\lambda$ is played by $-G/(c^4d)$, which depends on Newton's constant $G$, the speed of light in the vacuum $c$, and the distance $d$ between the clock and the system. In \cite{smith_quantizing_2019}, it was noted that Eq.~\eref{nonlocal} was unitary for this system. Here, we see that indeed it is, as a particular, perturbative case of Eq.~\eref{SchrodingerEquation}. 

\subsubsection{Two-time conditional probabilities and Kucha\v{r}'s criticism}

One might argue, as Kucha\v{r} did~\cite{kuchar_time_2011} (see also Ref.~\cite{page_clock_1994}), that dynamics should not emerge from the one-time probability of `$a$ when $t$' computed in Eq.~\eqref{eq:onetimeprobability}, which implies that the conditional state in Eq.~\eqref{eq:timedependentConditionalState} satisfies the Schr\"{o}digner equation in Eq.~\eqref{eq:SchrodingerEquation}, but instead from two-time conditional probabilities of the sort `$b$ when $t_2$' given that `$a$ when $t_1$', where $a$ and $b$ are measurement outcomes of partial system observables at the times $t_1$ and $t_2$. We show here, following a similar approach as Ref.~\cite{hohntrinity2021}, that even in the presence of reference frame interactions, these two-time conditional probabilities yield the same dynamics implied by the one-time conditional probabilities and recover the same propagator generated by the Hamiltonian in Eq.~\eqref{eq:SchrodingerEquation}.

Suppose that we are interested in measurement outcomes in $a\in \mathcal{F}_A$ and $b\in\mathcal{F}_B$, which are  associated with the effect operators $E_A(a) \in \mathcal{E}(\mathcal{H}_S)$ and $E_B(b)\in \mathcal{E}(\mathcal{H}_S)$. Then the two-time conditional probability that $b$ when $t_2$ conditioned on $a$ when $t_1$ is given by the standard Born rule in terms of relational Dirac observables as
\eq{
    &P\left(b \mathrm{\ when \ } t_2 \, |\,  a \mathrm{\ when \ } t_1\right) \\
    &\hspace{40pt}\ce \frac{\bra{\Psi} F_A(t_1;a)F_B(t_2;b) F_A(t_1;a)\ket{\Psi}}{\bra{\Psi}F_A(t_1;a) \ket{\Psi}}, 
    \label{eq:two-timeConditionalProbability}
}
where $F_A(t_1;a) \ce \mathcal{R}^{-1}(t_1) E_A(a) \mathcal{R}(t_1)$ and $F_A(t_2;a) \ce \mathcal{R}^{-1}(t_2) E_B(b) \mathcal{R}(t_2)$. Supposing that the partial system observables are sharp, so that the effect operators are rank-1 projectors, $E_A(a) =\ket{a}\!\bra{a}$ and $E_B(b) =\ket{b}\!\bra{b}$, then by the definition of the time evolution map $U_{S|C}(t_2,t_1) \ce \mathcal{R}(t_2) \circ \mathcal{R}^{-1}(t_1)$ and direct computation it follows that
\begin{equation}
    P\left(b \mathrm{\ when \ } t_2 \, |\,  a \mathrm{\ when \ } t_1\right) = \abs{\bra{b}  U_{S|C}(t_2,t_1)  \ket{a}}^2.
\end{equation}
That is, the dynamics that emerges from the conditional probability in Eq.~\eqref{eq:two-timeConditionalProbability} is consistent with the time-evolution map propagating the conditional state $\ket{\psi_S(t_1)} = \ket{a}$ from $t_1$ to $t_2$, and computing the probability that at $t_2$ the conditional state is $\ket{\psi_S(t_2)} = \ket{b}$. 

The significance of this result is that, even in the presence of clock-system interactions, the relational formalism discussed here is agnostic about how dynamics is formulated, either in terms of one-time conditional probabilities and a conditional state governed by the Schrodinger equation that `collapses' after a measurement, or two-time conditional probabilities from which the same dynamics emerges from the Born rule alone. The latter is formulated in terms of a physical state that is time-independent and used to compute multi-time probabilities, and, in contrast to the conditional state, does not `collapse' after a measurement.

In this way, the distinction between two different evolutions\,---\,unitary evolution governed by the Schr\"{o}dinger equation and measurement-induced collapse\,---\,disappears, which is often invoked as a manifestation of the measurement problem. Instead, in the relational formalism discussed here, there is just the Born rule applied at the level of the physical Hilbert space, out of which unitary time evolution emerges relative to an appropriately defined clock  (i.e., temporal reference frame) and is not postulated a priori.

Given the emergence of unitary evolution from a (non-collapsing) physical state, relational Dirac observables, and the Born rule, two possibilities are worth considering. First, one can take the Born rule to be fundamental, and then unitary evolution naturally emerges in the way described above. In a sense, this is the opposite point of view of the usual Everettian interpretation, which is based on a time-dependent Schr\"odinger equation. There, one also has a global quantum state, but the unitary time evolution of this global state is taken as the fundamental ingredient, from which the Born rule and probabilities ought to emerge. 

The second possibility to consider is that neither the Born rule nor unitary evolution should be postulated. One may take the (non-collapsing) global physical state $\ket{\Psi}$ to be a fundamental representation of reality and postulate only its gauge invariance, $U(g)\ket{\Psi} = \ket{\Psi}$, without singling out any clock or time parameter. From this and the structure of the physical Hilbert space, the Born rule may emerge relative to the physical inner product (see \cite{ChataigOrigin} for more details). Then, as a subsequent step, unitary time evolution emerges as described above or as discussed in \cite{ChataigOrigin}. This would amount to a ``timeless Everettian view,'' in which neither Born probabilities nor unitary evolution are basic ingredients.

One can extend the two-time conditional probability in Eq.~\eqref{eq:two-timeConditionalProbability} by further conditioning on additional events associated with other relational Dirac observables to compute multi-time conditional probabilities, or generalize this two-time conditional probability to other reference frames (e.g., outcome $b$ when located at position $y$ conditioned on outcome $a$ when located at position $x$).

\subsection{Changes of spatial frames}
\label{Sec:ChangeOfSpatialFrame}

The change-of-frame map defined in Eq.~\eqref{eq:change-frame-map} can also be used to change between the perspectives of different spatial references. Indeed, a surprising result that emerged in this context is that spatial coherence relative to one frame appears as entanglement relative to another frame~\cite{giacominiQuantumMechanicsCovariance2019,de_la_hamette_entanglement-asymmetry_2022}, which is a consequence of a reference frame-dependent re-factorization of the physical Hilbert space~\cite{ali_ahmad_quantum_2022}. Here, we consider this trade-off when the reference frames interact, and demonstrate that, depending on the situation, for a given amount of coherence in one frame, the amount of entanglement in another reference frame can depend on the strength of interaction between the frames.

Here, we consider three particles, $A$, $B$, and $C$, each described by the Hilbert space $\mathcal{H}_i \simeq L^2(\mathbb{R})$ for $i\in\{A,B,C\}$, so that the kinematical Hilbert space is $\mathcal{H}_{\rm kin} \simeq \mathcal{H}_A\otimes \mathcal{H}_B\otimes \mathcal{H}_C$. In addition, suppose  the gauge group $G$ is generated by the constraint
\eq{
 C &= P_A + P_B + P_C  \\
 &\quad + \lambda \left(  P_A \otimes P_B + 
      P_B \otimes P_C + P_A \otimes P_C \right),
}
where $P_i$ is the momentum operator on $\mathcal{H}_i$ and $\lambda\in\mathbb{R}$ is the strength of the pair-wise interactions between $A$, $B$, and $C$. The group of reference frames for a particle is the group of spatial translations, and a measurement of the particle's position determines the orientation of the reference frame.  Position measurements are described by a covariant projective valued measure characterized by the effect density $\{E_i(x) = \ket{x}\!\bra{x}  \forall x\in\mathbb{R}\}$, where $\ket{x} = \frac{1}{\sqrt{2\pi}} \int \D p_i \, e^{-ixp_i} \ket{p_i}$ are generalized eigenstates of the $i$th particle's position operator and $\ket{p_i}$ are generalized eigenstates of $P_i$; the states $\ket{x}$ form a set of coherent states with respect to spatial translations~\cite{perelomov_generalized_1986,zhang_coherent_1990}, and are associated with the spatial orientation of the $i$th reference frame. Note that in the limit that $\lambda$ vanishes, the gauge group corresponds to a spatial translation of all three particles.

As computed in Appendix~\ref{Changing spatial quantum reference frames}, the change-of-frame map in Eq.~\eqref{eq:change-frame-map} that transforms the description of $BC$ relative to $A$ to the description of $AC$ relative to $B$ takes the form
\eq{
U_{A\to B}(y,x) = \int_G \D g \, \bra{y} \Delta_{B}^{\frac{1}{2}}(y)U(g) \Delta_{A}^{\frac{1}{2}}(x) \ket{x},
}
where we have included the subscript $A\to B$ to emphasize we are changing from frame $A$ to $B$, have suppressed factors of the identity for readability,  $\ket{x}$ and $\ket{y}$ are the orientations of $A$ and $B$, respectively, and 
\eq{
\Delta_A^{\frac{1}{2}} &=  \sqrt{1+ \lambda \left(  P_B + P_C \right) },\\
\Delta_B^{\frac{1}{2}} &=   \sqrt{1+ \lambda \left(  P_A +  P_C\right) },
}
which are seen to be independent of the orientations $x$ and $y$.

\begin{figure}[t!]
    \centering
    \includegraphics[width=\linewidth]{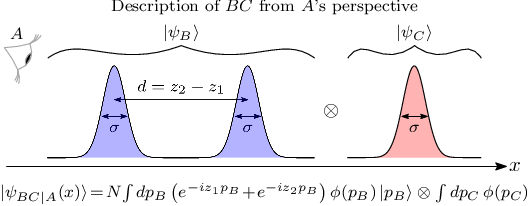}\\ \ \\
    \includegraphics[width=\linewidth]{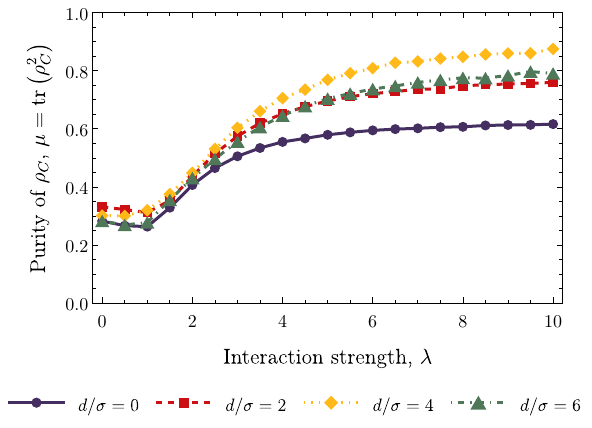}
    
    \caption{(Top) A pictorial representation of the joint state $\ket{\psi_{BC|A}(x)}$ of $BC$ from the perspective of $A$. (Bottom) A plot of the purity $\mu \ce \tr \left(\rho_C^2\right)$ of the reduced state of $C$ from the perspective of $B$ as a measure of entanglement between $A$ and $C$ in the transformed state $\ket{\psi_{AC|B}(y)}$ of $AC$ from the perspective of $B$ for different values of $d \ce z_2 - z_1 / \sigma$. The purity was computed numerically in Mathematica with a precision goal of two decimal places using the method ``MultidimensionalRule''. }
    \label{fig:purity}
\end{figure}

To exhibit the effect of an interaction (i.e., $\lambda \neq 0$), suppose that from the perspective of $A$, the joint state of $BC$ is separable and given by $\ket{\psi_{BC|A}(x)} = \ket{\psi_B} \ket{\psi_C}$. Further suppose that the state of $B$ is a spatial superposition of Gaussian wave packets localized at $z_1$ and $z_2$,
\eq{\ket{\psi_B} 
    &= N \int \D p_B \, \left(e^{-ip_Bz_1}+ e^{-ip_Bz_2} \right) \phi(p_B) \ket{p_B},
}
where $N \ce 1/\sqrt{2 +2e^{-(z_1-z_2)^2 /4\sigma^2}}$ is a normalization constant and $\phi(p) \ce \sqrt{\sigma} e^{- \sigma^2 p^2/2 }/ \pi^{\frac{1}{4}}$ with $\sigma$ being the spread of the Gaussian wave packet in position space, and suppose that the state of $C$ is a Gaussian wave packet localized the origin
\eq{
\ket{\psi_C} = \int \D p_C \, \phi(p_C)\ket{p_C}.
}
From the perspective of $B$, the state of $AC$ is given by
\eq{
\ket{\psi_{AC|B}(y)}&= U_{A\to B} (y,x)\ket{\psi_{BC|A}(x)}\\
&= N \!\int \!\D p_A \D p_C \,\frac{ \sqrt{1+ \lambda p_C + \lambda^2 p_C ^2} }{\abs{1+\lambda \left( p_A +  p_C \right)}} \\
&\quad \times   e^{-ixp_A} \!\left(\! e^{-i \tilde{p}(z_1-y)} \!+\! e^{-i\tilde{p}(z_2-y)}\!\right)\!  \\
&\quad \times \phi(\tilde{p}) \phi(p_C) \ket{p_A}\ket{p_C},
\label{eq:TransformedState}
}
where
\eq{
\tilde{p} \ce - \frac{p_A+p_C + \lambda  p_A p_C }{1+\lambda \left( p_A +  p_C \right) }.
}

It's clear from Eq.~\eqref{eq:TransformedState} that the transformed state $\ket{\psi_{AC|B}(y)}$ describes an entangled state of $A$ and $C$ relative to $B$, despite the state of $A$ and $B$ relative to $C$ being a product state. As a measure of this entanglement, in Fig.~\ref{fig:purity} we plot the purity of the reduced state of $C$, $\mu \ce \tr \left(\rho_C^2\right)$, where $\rho_C \ce \tr_A \left( \ket{\psi_{AC|B}(y)}\! \bra{\psi_{AC|B}(y)} \right)$, as a function of the interaction strength $\lambda$. It is seen that the entanglement between $AC$ depends non trivially on $\lambda$: in all cases, as $\lambda$ increases from $0$ to $1$, entanglement decreases, then as $\lambda$ increases from $1$, entanglement increases monotonically, approaching an asymptotic value. It is also seen that the resulting entanglement between $AC$ depends non-trivially on the scale $d\ce z_2-z_1$ of the superposition of $B$, and is independent of the orientations $x$ and $y$.

\section{Conclusion and Outlook}
\label{Sec:ConclusionAndOutlook}
We have developed a formulation of quantum reference frames beginning from a gauge theory with a unimodular symmetry group $G$. We constructed the physical Hilbert space $\mathcal{H}_{\rm phys}$ and its inner product from the kinematical Hilbert space $\mathcal{H}_{\rm kin} = \mathcal{H}_{R} \otimes \mathcal{H}_{S}$ by group averaging, and showed that conditioning physical states on a quantum reference frame $R$, taken to be a system of coherent states associated with the group $G_R$ acting on a factor $\mathcal{H}_R$, gauge fixes the physical inner product in direct analogy with the Faddeev--Popov procedure used in path integrals. The resulting map $\mathcal{R}(r)$ from physical states to conditional states of the system $S$ relative to the frame $R$, defined in Eqs.~\eqref{eq:conditionalState} and~\eqref{eq:reducationMap}, is a generalization of the Page--Wootters reduction map. Its principal virtue is that it does not require $G$ to act as a tensor product representation across the factorization  $\mathcal{H}_{\rm kin} = \mathcal{H}_{R} \otimes \mathcal{H}_{S}$, nor $G_R$ to coincide with $G$, and it therefore applies to frames that interact with the systems they describe.

Within this framework, we constructed relational Dirac observables on $\mathcal{H}_{\rm phys}$. In doing so, we generalized existing relativization maps, specfically, the quantization of classical power-series expressions for relational observables~\cite{dittrich_partial_2006,dittrich_partial_2007,thiemann_modern_2008, hohntrinity2021,hohn_equivalence_2021}, the $\$$-map of Ref.~\cite{bartlett_reference_2007}, and the \textyen-map of Ref.~\cite{loveridge_symmetry_2018}, by the inclusion of the Faddeev--Popov factor $\Delta_R(r)$ that accompanies the gauge-fixed inner product in Eq.~\eqref{eq:PhysInnerProduct2}.  When the frame and system described relative to it do not interact, the previous constructions are recovered, while in general, the Faddeev--Popov factor is what ensures the resulting operators are correctly normalized and associated with a well-defined gauge fixing. We further show that the expectation values of these observables admit a conditional probability interpretation that can be understood in terms of partial and complete observables~\cite{rovelli_partial_2002,dittrich_partial_2006,dittrich_partial_2007, rovelli_why_2014,thiemann_modern_2008}. The use of Faddeev--Popov operators and a generalization of the Page--Wootters conditional probabilities was also advocated in \cite{Chataignier:2019,Chataignier:2020}, and the relational observables constructed there can be seen as a previous version of the present construction.

Our main result is the construction of the generalized change-of-frame map $U(\tilde{r},r)$ defined in Eq.~\eqref{eq:change-frame-map} and the proof of its unitarity. Since the generalized Page--Wootters reduction maps are invertible, the composition $U(\tilde{r},r) = \mathcal{R}(\tilde{r})\circ\mathcal{R}^{-1}(r)$ relates the description supplied by any two frames, and we proved that this map is unitary under the stated conditions. The physical content of the theory is thus independent of the frame used to express it.

We applied the formalism in two settings. For temporal frames, we showed that a unitary Schr\"{o}dinger equation emerges for the system relative to the clock even when the clock and system interact, so that the clock-system coupling does not obstruct a unitary relational dynamics; in an appropriate limit, the time-nonlocal Schr\"{o}dinger equation reported in Ref.~\cite{smith_quantizing_2019} is recovered. This resolves the apparent non-unitarity of the relational dynamics observed previously for interacting clocks~\cite{smith_quantizing_2019, paiva_non-inertial_2022, paiva_flow_2022, cohen_quantum_2023, mendes_non-linear_2021, rijavec_robustness_2023, gemsheim_emergence_2023,ludescher_quantum_2026,rijavec_conditions_2026,shimada_nonunitarity_2026}, which we show to be an artifact of normalizing conditional states with the naive inner product rather than the gauge-fixed physical inner product. These two inner products differ by the Faddeev--Popov factor precisely when the action of the gauge group fails to factorize across $\mathcal{H}_{R}\otimes\mathcal{H}_S$; with the correct inner product, the time evolution of the conditional state of the system is unitary, and the previously reported loss of normalization carries no invariant meaning. For spatial frames, we applied the change-of-frame map to coupled spatial reference frames and found that the correspondence between coherence in one frame and entanglement in another~\cite{giacominiQuantumMechanicsCovariance2019,ali_ahmad_quantum_2022,de_la_hamette_entanglement-asymmetry_2022} is modified by the coupling, so that the extent to which coherence and entanglement are exchanged under a change of reference frame depends on the interaction between the frames.

Several directions merit further work. The most immediate is that our results have been obtained under the assumption that it is possible to construct the Faddev--Popov operator $\Delta_R(r)$, which requires $\mathcal{G}[E_R(r)\otimes I_S]$ to be invertible. However, this may not always be possible, or only possible on a subspace of the physical Hilbert space $\mathcal{H}_{\rm phys}$ when the kernel $\mathcal{G}[\ket{r}\!\bra{r}\otimes I_S]$ is not trivial. This has a close analogy with the Gribov problem, which prevents global gauge fixings, and in this case, one would expect the change-of-frame map to be unitary only on a subspace of $\mathcal{H}_{\rm phys}$. Related is a further generalization of the reduction map, defined in Eqs.~\eqref{eq:conditionalState} and~\eqref{eq:reducationMap}, based on covariant instruments, along the lines of Ref.~\cite{ludescher_quantum_2026}, away from rank-1 covariant POVMs defined by a family of coherent states associated with different orientation of the employed reference frame.

A second direction in the relation of the generalized Page-Wootters formalism developed here to the perspective-neutral formalism~\cite{hoehn_how_2018,vanrietvelde_change_2019,hohn_how_2020,vanrietvelde_change_2020,vanrietvelde_switching_2023,delahametteperspectiveneutral2021}. The two approaches share a great deal: both are built on a physical Hilbert space obtained by constraint quantization, both can express relational physics through Dirac observables, both admit reference frames as subsystems, and both furnish change-of-frame maps with similar compositional structure. However, the reduction maps in each approach are, a priori, distinct. The perspective-neutral reduction proceeds by trivializing the constraint with respect to a chosen reference frame, whereas the generalized Page--Wootters formalism proceeds by gauge fixing the physical inner product, which induces a generalized Page--Wootters reduction map. For ideal frames and tensor-product gauge group actions, one expects these two formalisms to agree~\cite{hohntrinity2021,hohn_equivalence_2021,delahametteperspectiveneutral2021}, and establishing explicitly when that is not the case would clarify the status of both.

Another direction concerns the relation of the present construction to the algebraic approach to quantum reference frames and to the crossed product~\cite{ jensen_generalized_2023, ali_ahmad_quantum_2024,de_vuyst_crossed_2025,de_vuyst_gravitational_2025, fewster_quantum_2025, gomez_entanglement_nodate}. Group averaging over $G$ produces an algebra of gauge-invariant observables that is related to a crossed product of the system algebra by the group action, and the Faddeev--Popov factor appearing in our gauge-fixed inner product plays a role analogous to the density operator that renders the crossed product algebra semifinite. Indeed, it would be of interest to determine whether our definition of relational observables, associated with the reduction map $\mathcal{R}(r)$, implements the passage from a Type III to a Type II algebra effected by adjoining a frame, and whether the unitarity of the change-of-frame map has an algebraic counterpart as an isomorphism of crossed product algebras built from distinct frames (see Ref.~\cite{ali_ahmad_quantum_2024} for a related discussion when $G=G_R$). A related question is what becomes of the construction when $G$ is not unimodular~\cite{ali_ahmad_semifinite_2025}. In this case, the modular function then obstructs the equivalence of the left and right Haar measures used in group averaging, and the interplay between this obstruction and the type classification of the resulting algebra of relational observables deserves attention.

Finally, given that the frame-dependence of coherence and entanglement is sensitive to interactions, it would be worth investigating how other frame-dependent notions, such as entanglement entropy, subsystem structure, and causal structure, behave under the generalized change-of-frame map. We leave these questions for future work.

\begin{acknowledgments}
We thank Nadja Magalh\~{a}es and the organizers of the 2025 Time in Physics International Workshop in S\~{a}o Paulo, Brazil, where the seeds that led to this paper were planted. We also thank Shadi Ali Ahmad for useful feedback on an early draft of this article.
\end{acknowledgments}

\appendix

\onecolumngrid

\section{The conditional system Hilbert space is a subspace of the kinematical system Hilbert space}
\label{SubspaceAppendix}

Consider a conditional state $\ket{\psi_{S}(r)} \in \mathcal{H}_{S|R}$, defined as the image of $\mathcal{R}$ acting on a physical state $\ket{\Psi} \in \mathcal{H}_{\rm phys}$,
\eq{
\ket{\psi_{S}(r)} &\ce \mathcal{R}(r)\ket{\Psi} \\
&=\bra{r} \otimes I_S \Delta_R^{\frac{1}{2}}(r) \ket{\Psi} \\
& = \sum_k \left(\bra{r} \bra{k} \Delta_R^{\frac{1}{2}}(r) \ket{\Psi}\right) \ket{k},
\label{eq:firstStep}
}
where, without loss of generality, in passing from the second to third equality we have resolved the identity $I_S = \sum_{k} \ket{k}\!\bra{k}$ in a countable, orthonormal basis $\{\ket{k}\}$ for $\mathcal{H}_S$, which is always possible because it is assumed that $\mathcal{H}_S$ is separable; this allows $\ket{\psi_S(r)}$ to be expanded in the basis $\{\ket{k}\}$ as in the last line of Eq.~\eqref{eq:firstStep}. Similarly, we can consider another conditional state defined similarly,
\eq{
\ket{\phi_{S}(r)} = \sum_k \left( \bra{r} \bra{k} \Delta_R^{\frac{1}{2}}(r) \ket{\Phi}\right) \ket{k}.
}
Then, the kinematical inner product between these states is
\eq{
\braket{\psi_{S}(r) | \phi_{S}(r)}_{\rm kin} &= \sum_{k,k'}\left [ \left(\bra{\Psi}  \Delta_R^{\frac{1}{2}}(r) \ket{r} \ket{k}\right)   \bra{k} \right] \left[ \left(\bra{r} \bra{k'} \Delta_R^{\frac{1}{2}}(r) \ket{\Phi}\right) \ket{k'}\right] \\
&= \sum_k \left [\bra{\Psi}  \Delta_R^{\frac{1}{2}}(r) \ket{r} \ket{k} \right] \left[ \bra{r} \bra{k} \Delta_R^{\frac{1}{2}}(r) \ket{\Phi}\right]\\
&=\bra{\Psi}  \left[ \Delta_R^{\frac{1}{2}}(r) \ket{r} \!   \bra{r} \otimes \left(\sum_k \ket{k}\!\bra{k}\right) \Delta_R^{\frac{1}{2}}(r) \right]\ket{\Phi}\\
&=\bra{\Psi}  \left[ \Delta_R^{\frac{1}{2}}(r) \ket{r} \!   \bra{r} \otimes I_S\Delta_R^{\frac{1}{2}}(r) \right]\ket{\Phi} \\
&= \braket{\Psi |\Phi}_{\rm phys} \\
&=: \braket{\psi_{S}(r) | \phi_{S}(r)}_{\mathcal{H}_{S|R}},
}
where the fifth equality follows from Eq.~\eqref{eq:PhysInnerProduct2}. Thus, it is seen that the kinematical inner product between two conditional states is equal to the physical inner product between the two associated physical states, which, by definition of the conditional states, is equal to the inner product on $\mathcal{H}_{S|R}$.  Note also that the image of $\mathcal{R}(r)$ contains the zero vector, and is closed under vector addition and scalar multiplication. Thus, it follows that $\mathcal{H}_{S|R} \subseteq \mathcal{H}_S$.

\section{Derivation of the time evolution operator}
\label{SchrodingerDerivation}
The time evolution operator is defined as
\eq{
U_{S|C}(t_2,t_1) &\ce \mathcal{R}(t_2) \circ \mathcal{R}^{-1}(t_1) \\
&= \left( \bra{t_2} \otimes I_S \right) \Delta^{\frac{1}{2}}(t_2) \int_G \D f \, U(g) \Delta^{\frac{1}{2}}(t_1) \left( \ket{t_1} \otimes I_S \right)
}
Then,
\eq{
\frac{\partial}{\partial t_2} U_{S|C}(t_2,t_1) &= \left( \bra{t_2} \left(-H_C\right) \otimes I_S \right) \Delta^{\frac{1}{2}}(t_2) \int_G \D g \, U(g) \Delta^{\frac{1}{2}}(t_1) \left( \ket{t_1} \otimes I_S \right)
\\
&\quad +
\left( \bra{t_2} \otimes I_S \right) \frac{d\Delta^{\frac{1}{2}}(t_2)}{dt_2} \int_G \D g \, U(g) \Delta^{\frac{1}{2}}(t_1) \left( \ket{t_1} \otimes I_S \right)\\
&= \left( \bra{t_2}  \otimes I_S \right) \left( I_C\otimes H_S + H_{\rm int}\right) \Delta^{\frac{1}{2}}(t_2) \int_G \D g \, U(g) \Delta^{\frac{1}{2}}(t_1) \left( \ket{t_1} \otimes I_S \right)
\\
&\quad +
\left( \bra{t_2} \otimes I_S \right) \frac{d\Delta^{\frac{1}{2}}(t_2)}{dt_2} \int_G \D g \, U(g) \Delta^{\frac{1}{2}}(t_1) \left( \ket{t_1} \otimes I_S \right)\\
&= \left( H_S + \int_G \D g \, \bra{t_2} \otimes I_S \left[ H_{\rm int} \Delta(t_2) + i \frac{d \Delta^{\frac{1}{2}}(t_2)}{dt_2} \Delta^{\frac{1}{2}}(t_2) \right] U(g) \ket{t_2}\otimes I_S \right) U_{S|C}(t_2,t_1) \\
&=:H_{S|C} (t_2) U_{S|C}(t_2,t_1),
\label{eq:derivativeOfTEO}
}
where $H_{S|C} (t_2)$ is defined in Eq.~\eqref{eq:HSC}. In moving from the first to second equality, we have used the fact that the $\Delta^{\frac{1}{2}}(t_2)$ is an invariant operator, so it commutes with the constraint operator $H = H_C \otimes I_S + I_C\otimes H_S + H_{\rm int}$. This means that $-H_C\otimes I_S \Delta^{\frac{1}{2}}(t_2) = \left( I_C\otimes H_S + H_{\rm int}\right) \Delta^{\frac{1}{2}}(t_2)$. In moving from the second to third equality,  we have inserted the physical identity operator $I_{\rm phys} = \int \D \, \Delta^{\frac{1}{2}}(t_2)  U(g) \left( \ket{t_2}\!\bra{t_2} \otimes I_S \right) \Delta^{\frac{1}{2}}(t_2)$. Integrating Eq.~\eqref{eq:derivativeOfTEO} yields  the formal solution to this equation
\eq{
U(t_2,t_1) = I_S - i \int_{t_1}^{t_2} \D t \, H_{S|C}(t) U(t,t_1).
}
Solving this equation iteratively via the Picard method yields the expression for the time evolution operator expressed as a time-ordered exponential,
\eq{
U_{S|C}(t_2,t_1) = \mathcal{T} \exp \left[-i\int_{t_1}^{t_2}dt \, H_{S|C}(t) \right],
}
as stated in Eq.~\eqref{eq:timeorderedExponential}.

\section{Changing spatial quantum reference frames}
\label{Changing spatial quantum reference frames}
As considered in Sec.~\ref{Sec:ChangeOfSpatialFrame}, suppose the gauge group is generated by the following constraint operator,
\eq{
    C &= P_A + P_B + P_C  + \lambda \left(  P_A \otimes P_B + 
    P_B \otimes P_C +  P_A \otimes P_C\right),
}
where $P_i$ for $i \in\{A,B,C\}$ are momentum operators on $\mathcal{H}_A \simeq L^2(\mathbb{R})$. 

The change-of-frame map in Eq.~\eqref{eq:change-frame-map} reads
\eq{
U_{A\to B}(y,x) = \int_G \D g \, \bra{y} \otimes I_{AC}\Delta_{B}^{\frac{1}{2}}(y)U(g) \Delta_{A}^{\frac{1}{2}}(x) \ket{x} \otimes I_{BC},
}
where $\ket{x}$ and $\ket{y}$ are generalized eigenstates of the position operators on $\mathcal{H}_A$ and $\mathcal{H}_B$, respectively. To evaluate the change-of-frame map, we first compute $\Delta_A$ by calculating
\eq{
&\mathcal{G}[\ket{x}\!\bra{x} \otimes I_{BC}] \\
&= \int dg \, U(g) \ket{x}\!\bra{x} \otimes I_{BC}  U(g)^\dagger\\
&= \int \D g \, e^{-igP_A} e^{-ig \lambda P_A\otimes P_B } e^{-ig \lambda P_A\otimes P_C } \ket{x}\!\bra{x} \otimes I_{BC} e^{igP_A} e^{ig \lambda P_A\otimes P_B } e^{ig \lambda P_A\otimes P_C }\\
&= \int \D g \,  e^{-ig \lambda P_A\otimes P_B }  e^{-ig \lambda P_A\otimes P_C } \ket{x+g}\!\bra{x+g} \otimes  \int dp_B \, \ket{p_B}\!\bra{p_B} \otimes  \int dp_C \, \ket{p_C}\!\bra{p_C}  e^{ig \lambda P_A\otimes P_B } e^{ig \lambda P_A\otimes P_C }\\
&=\int \D p_B \int dp_C\int dg \,  e^{-ig \lambda  \left(  p_B +  p_C \right)P_A  }  \ket{x+g}\!\bra{x+g} \otimes  \ket{p_B}\!\bra{p_B} \otimes  \ket{p_C}\!\bra{p_C} e^{ig\lambda  \left( p_B +   p_C \right)P_A } \\
&= \int \D p_B \int \D p_C \int \D g \,    \ket{x+g\left[1  + \lambda  \left(p_B +    p_C\right)\right]}\!\bra{x+g\left[1  + \lambda  \left(p_B +    p_C\right)\right]} \otimes  \ket{p_B}\!\bra{p_B}  \otimes \ket{p_C}\!\bra{p_C}.
}
To proceed, let us change integration variables to $y = x+g\left[1  + \lambda  \left(p_B +    p_C\right)\right]$, so that $dg = dy/\left[1  + \lambda  \left(p_B +    p_C\right)\right]$ and
\eq{
\mathcal{G}[\ket{x}\!\bra{x} \otimes I_{BC}]
&=  \int dp_B   \int dp_C \,  \frac{1}{1+ \lambda \left( p_B +  p_C\right)} \int dy \,    \ket{y}\!\bra{y} \otimes  \ket{p_B}\!\bra{p_B}  \otimes \ket{p_C}\!\bra{p_C}\\
&=  I_A \otimes \int dp_B \int dp_C \, \frac{1}{1+ \lambda \left(  p_B +  p_C\right)}   \ket{p_B}\!\bra{p_B}  \otimes \ket{p_C}\!\bra{p_C}\\
&=   I_A \otimes  \frac{1}{1+ \lambda \left(  P_B +   P_C\right)}.
}
Given that $\Delta_A \ce\mathcal{G}[\ket{x}\!\bra{x} \otimes I_{BC}]^{-1}$, it then follows that
\eq{
\Delta_A^{\frac{1}{2}} &= \sqrt{1+ \lambda  \left(P_B +  P_C \right)},
}
and a similar calculation yields
\eq{
\Delta_B^{\frac{1}{2}} &=  \sqrt{1+ \lambda \left(  P_A +  P_C\right) }.
}
Note that $\Delta_A$ and $\Delta_B$ are independent of the orientations $x$ and $y$ of reference frames $A$ and $B$. The change-of-frame map is then
\eq{
U_{A\to B}(y,x) &=  \int_G \D g \, \bra{y} \otimes I_{AC}\Delta_{B}^{\frac{1}{2}}(y)U(g) \Delta_{A}^{\frac{1}{2}}(x) \ket{x} \otimes I_{BC}\\
&= \frac{1}{2\pi} \int_G \D g \, \int \D p_B \,  e^{iyp_B} \bra{p_B} \otimes I_{AC}\Delta_{B}^{\frac{1}{2}}(y)U(g) \Delta_{A}^{\frac{1}{2}}(x) \int dp_A e^{-ixp_A} \ket{p_A} \otimes I_{BC}\\
&= \frac{1}{2\pi}  \int_G \D g  \int \D p_A \D p_B \, e^{i(yp_B-xp_A)}\Delta_{B}^{\frac{1}{2}} U(g) \Delta_{A}^{\frac{1}{2}}  \ket{p_A} \otimes \bra{p_B} \otimes I_C\\
&= \frac{1}{2\pi}  \int_G \D g  \int \D p_A \D p_B \, e^{i(yp_B-xp_A)}  \ket{p_A} \otimes \bra{p_B} \otimes \eta_C(p_A,p_B),
}
where
\eq{
\eta_C(p_A,p_B) &\ce\sqrt{1+ \lambda\left(  p_A +  P_C\right) } \sqrt{1+ \lambda \left(  p_B + P_C\right) } e^{-ig\left(p_A + p_B + P_C  + \lambda \left[ p_Ap_B + 
      p_B P_C + p_A P_C\right] \right)}.
}

Now suppose that from the perspective of $A$, the state of $B$ and $C$ is in a product state given by
\begin{equation}
    \ket{\psi_{BC|A}(x)} = \ket{\psi_B} \ket{\psi_C} = N\left(\ket{z_1}+ \ket{z_2} \right) \int \D p_C \, \psi_C(p_C)\ket{p_C},
\end{equation}
where $N \ce 1/\sqrt{2 +2e^{-(z_1-z_2)^2 /4\sigma^2}}$ is a normalization constant and 
\eq{
\ket{z_i} = \int \D p_B \,  e^{-ip_Bz_i} \phi(p_B) \ket{p_B}, \quad \phi(p_B) \ce \frac{\sqrt{\sigma}}{\pi^{\frac{1}{4}} } e^{-\frac{ \sigma^2 p^2}{2 }}, 
}
is a Gaussian wave packet centered at $z_i\in\mathbb{R}$ with width in position space $\sigma>0$.
Then
\eq{
\ket{\psi_{AC|B}(y)} &= U_{A\to B}(y,x) \ket{\psi_{BC|A}(x)}\\
&= \frac{N}{2\pi}\int_G \D g  \int \D p_A \D p_B \, e^{i(yp_B-xp_A)} \left(e^{-ip_Bz_1} + e^{-ip_Bz_2} \right)\phi(p_B) \ket{p_A} \otimes  \eta_C(p_A,p_B) \ket{\psi_C}\\
     &= \int \D p_A \D p_C \, \psi_{AC|B}(y,p_A,p_C)\ket{p_A}  \ket{p_C},
}
where
\eq{
\psi_{AC|B}(y,p_A,p_C) &\ce \frac{N}{2\pi}\int \D g \D p_B \, e^{-ixp_A}\left(e^{-ip_B(z_1-y)} + e^{-ip_B(z_2-y)}\right)e^{-ig\left(p_A + p_B  + p_C + \lambda \left[  p_Ap_B + 
      p_B p_C +  p_A p_C \right]\right)} \\
     &\quad\times \sqrt{1+ \lambda \left(  p_A + p_C \right) } \sqrt{1+ \lambda \left(  p_B +  p_C \right) }  \phi(p_B) \psi_C(p_C)\\
     &=  N \int \D p_B \, \left(e^{-ip_B(z_1-y)} + e^{-ip_B(z_2-y)}\right)  \delta \left(p_A + p_B  + p_C + \lambda \left[ p_Ap_B  +
      p_Bp_C + p_Ap_C\right]  \right) \\
     &\quad\times \sqrt{1+ \lambda \left(  p_A +   p_C \right)} \sqrt{1+ \lambda \left(  p_B +   p_C\right)}\phi(p_B)  \psi_C(p_C)\\
     &= N \int \D p_B \, e^{-ixp_A}\left(e^{-ip_B(z_1-y)} + e^{-ip_B(z_2-y)}\right) 
     \frac{ \delta \left(p_B-\tilde{p} \right)}{\abs{1+\lambda \left( p_A + p_C\right) }} \\
     &\quad \times \sqrt{1+ \lambda \left(p_A +   p_C\right)} \sqrt{1+ \lambda \left(  p_B +   p_C\right)}  \phi(p_B)\psi_C(p_C)\\
     &= N \frac{ \sqrt{1+ \left( p_A +   p_C\right)} }{\abs{1+\lambda \left( p_A +  p_C \right)}}   \sqrt{1+ \lambda \left(  \tilde{p} +  p_C \right)}  e^{-ixp_A} \left( e^{-i \tilde{p}(z_1-y)} + e^{-i\tilde{p}(z_2-y)}\right)  \phi(\tilde{p}) \psi_C(p_C) \\
     &= N \frac{ \sqrt{1+ \lambda p_C + \lambda^2 p_C ^2} }{\abs{1+\lambda \left( p_A +  p_C \right)}}     e^{-ixp_A} \left( e^{-i \tilde{p}(z_1-y)} + e^{-i\tilde{p}(z_2-y)}\right)  \phi(\tilde{p}) \psi_C(p_C). 
}
and
\eq{
\tilde{p} \ce - \frac{p_A+p_C + \lambda p_A p_C}{1+\lambda\left( p_A +  p_C \right)}.
}
Letting $\psi_C(p_C) = \phi(p_C)$ yields the result stated in Eq.~\eqref{eq:TransformedState} of Sec.~\ref{Sec:ChangeOfSpatialFrame}.

\bibliographystyle{apsrev4-2}        
\twocolumngrid

\bibliography{references}

\end{document}